\newcommand{\infrulemath}[2]{%
\begin{array}{c}
       {#1} \\
       \hline
       {#2}
\end{array}
}
\newtheorem{mydefinition}{\textbf{Definition}}
\newtheorem{myproposition}{\textbf{Proposition}}
\newtheorem{theo}{\textbf{Theorem}}
\title{On Distributed Density in Tuple-based Coordination Languages}
\author{Denis Darquennes \qquad \quad \qquad Jean-Marie Jacquet
\institute{Faculty of Computer Science \\
University of Namur 
\\ Namur, Belgium}
\email{denis.darquennes@unamur.be \qquad jean-marie.jacquet@unamur.be}
\and
Isabelle Linden
\institute{Department of Business Administration \\
Univeristy of Namur\\
Namur, Belgium}
\email{isabelle.linden@unamur.be}
}
\begin{document}
\maketitle

\begin{abstract}

Inspired by the chemical metaphor, this paper proposes an extension of
Linda-like languages in the aim of modeling the coordination of
complex distributed systems. The new language manipulates finite sets
of tuples and distributes a density among them. This new concept adds
to the non-determinism inherent in the selection of matched tuples a
non-determinism to the tell, ask and get primitives on the
consideration of different tuples. Furthermore, thanks to de Boer and
Palamidessi's notion of modular embedding, we establish that this new
language strictly increases the expressiveness of the Dense Bach
language introduced earlier and, consequently, Linda-like languages.

\end{abstract}





\section{Introduction}
\label{jmj-introduction}

The technological evolutions over the last recent years confirm the
upward trends in a pervading of our everyday environments by new
mobile devices injecting or retrieving information from very dynamic
and dense networks. In order to garantee robustness and continuity of
the services they propose, the global structure must be tolerant
regarding any modification in topology, device technology or creation
of new services. These constraints will be fulfilled if full
self-organisation is incorporated as an inherent property of the
coordination models.  Coordination languages based on tuple spaces
offer an elegant response to such constraints. The Bach language - a
dialect of Linda developed at the University of Namur - is one of
them, and permits to model in an elegant way the interaction between
different components through the deposit and retrieval of tuples in a
shared space. As its basic form only allows the manipulation of one
tuple at a time and since the selection between several tuples
matching a required one is provided in an non-deterministic fashion, a
first extension was first proposed in \cite{Jacquet-Linden-Darquennes-Foclasa13}
in the aim of enriching traditional data-based coordination languages
by a notion of density attached to tuples, thereby yielding a new
coordination language, called Dense Bach.

To illustrate the use of the dense tuples, let us consider the context of service oriented computing. Let us imagine a situation
where a group of n researchers planing their presence to a conference want to book rooms in a hotel.
Their query could naturally consist in getting n rooms in the tuple space of free rooms
in any hotel, and to effectivelly book them only if all the n rooms are available, or if the number
of rooms combined with their number of beds meets the required n amount. In a more 
scientific context, we could consider the chemical reactions between three elements N (nitrogen), O (oxygen) and S (sulphur) present in a reactor.
Following the distribution - understood here in the chemical sense - of the global density between those three different reactants, some reactions will be more facilitated and others minder, and the concentration of the resulting products could alternate more in favour of one solution (like $NO_2$) or another (like $SO_2$).  

Considering those two previous examples, we then propose in this paper as a next natural step to consider a set
of tuples, among which the density is distributed. The new abstract
language resulting from that extension strictly increases the
expressiveness of Linda-like languages. It is still built upon the
four primitives tell, ask, get and nask, accessing a tuplespace, also
named subsequently store. However, it enhances it with a non-deterministic
behavior of the tell, ask and get primitives with regard to which
tuples from the set
 are considered.

Our purposes remain of a theoretical nature and, hence, for
simplicity purposes, we shall consider in this paper a simplified version where
tuples are taken in their simplest form of flat and unstructured
tokens. Nevertheless, the resulting simplification of the matching
process is orthogonal to our purposes and, consequently, our results
can be directly extended to more general tuples.

This paper fits into the continuity of previous work done by the
authors, among others of
\cite{Br-Ja-COORD97,BrJa-SCP00,BrJaLi-Foclasa02,LindenJBB06,Jacquet-Linden-Darquennes-Foclasa13}. As
a result, our approach follows the same lines of research, and employs
de boer and Palamidessi's modular embedding to test the expressiveness
of languages. The rest of this paper is consequently organized as
follows. Section 2 presents our extension of the Dense Bach language,
called Dense Bach with Distributed Density and, after the definition
of the distribution of density on a list of tokens, defines an
operational semantics. Section 3 provides a short presentation of
modular embedding and, on that basis, proceeds with an exhaustive
comparison of the relative expressive power of the languages Dense
Bach and Dense Bach with Distributed Density. Finally, section 4
compares our work with related work, draws our conclusions and
presents expectations for future work.


\section{Densed Tuple-based Coordination Languages}
\label{denseBachLang}

This section exposes in four points the different densed tuple-based coordination languages, firstly by presenting their
primitives, and secondly the different languages. The two last points present an operational semantic based on transition systems.   

\subsection{Primitives}

We start by defining the Bach and Dense Bach languages \cite{Jacquet-Linden-Darquennes-Foclasa13} from
which the language under study in this paper is an extension.

\subsubsection{Bach and Dense Bach}

The following definition formalizes how we attach a density to them.

\begin{mydefinition}
Let {\em Stoken} be a enumerable set, the elements of which are
subsequently called tokens and are typically represented by the
letters $t$ and $u$. Define the association of a token $t$ and a
positive integer $n \in \mathbb{N}$ as a \textit{dense
  token}. Such an association is typically denoted as $t(n)$. Define
then the set of dense tokens as the set {\em $SDtoken$}. Note that
since $Stoken$ and $\mathbb{N}$ are both enumerable, the set $SDtoken$ is
also enumerable.

Intuitively, a dense token $t(m)$ represents the simultaneous presence
of $m$ occurrences of $t$. As a result, $\lbrace t(m)\rbrace$
is subsequently used to represent the multiset 
$\lbrace t, \cdots, t\rbrace$
composed of these $m$ occurrences. 
Moreover, given two multisets of tokens
$\sigma$ and $\tau$, we shall use $\sigma \cup \tau$ to denote the multiset
union of elements of $\sigma$ and $\tau$. As a particular case,
by slightly abusing the syntax in writing $\{ t(m), t(n) \}$, we have
\( \lbrace t(m)\rbrace \cup \lbrace t(n)\rbrace = \lbrace t(m),t(n)\rbrace = \lbrace t(m + n)\rbrace
\).
Finally, we shall use $\sigma \uplus \lbrace t(m)\rbrace$ to denote,
on the one hand, the multiset union of $\sigma$ and $\lbrace
t(m)\rbrace$, and, on the other hand, the fact that $t$ does not
belong to $\sigma$. 
\end{mydefinition}

\begin{mydefinition} Define the set ${\cal T}$ of the token-based primitives
as the set of primitives $T$ generated by the following grammar:
\begin{eqnarray*}
T & ::= &
    tell(t) \ | \ ask(t)\ | \ get(t) \ | \ nask(t) 
\end{eqnarray*}
where $t$ represents a token.
\end{mydefinition}

\begin{mydefinition}
Define the set  
of dense token-based primitives ${\cal T}_d$
as the set of primitives $T_d$ generated by the following grammar:
\begin{eqnarray*}
T_d & ::= &
    tell(t(m)) \ | \ ask(t(m))\ | \ get(t(m)) \ | \ nask(t(m)) 
\end{eqnarray*}
where $t$ represents a token and $m$ a positive natural number.
\end{mydefinition}

The primitives of the Bach language are essentially the Linda ones
rephrased in a constraint-like setting. As a result, by calling
\textit{store} a multiset of tokens aiming at representing the current content
of the tuple space, the execution of the $tell(t)$ primitives amounts
to enrich the store by an occurrence of $t$. The $ask(t)$ and $get(t)$
primitives check whether $t$ is present on the store with the latter removing
one occurrence. Dually, $nask(t)$ tests whether $t$ is absent from the
store.

The primitives of the dense Bach language extend these primitives by
simultaneously handling multiple occurrences. Accordingly, $tell(t(m))$ atomically
puts $m$ occurrences of $t$ on the store and $ask(t(m))$ together with
$get(t(m))$ require the presence of at least $m$ occurrences of $t$
with the latter removing $m$ of them. Moreover, $nask(t(m))$ verifies
that there are less than $m$ occurrences of $t$.

\begin{figure}[t]
\begin{center}
\begin{minipage}{9cm}
\begin{scriptsize}
\[ \begin{array}{r@{\hspace*{0.25cm}}c}
     {\bf(T)}
& 
    $$\langle \; tell(t) \; \vert \; \sigma \;\rangle \longrightarrow \langle \; E \;\vert \; \sigma \cup \lbrace t\rbrace \;\rangle$$\\ 
\\ \\ 
     {\bf(A)}        
& 
     $$\langle \; ask(t) \;\vert\; \sigma \cup \lbrace t \rbrace \;\rangle \longrightarrow \langle\; E \;\vert\; \sigma \cup \lbrace t\rbrace \;\rangle$$\\ 
\\ \\ 
     {\bf(G)}        
& 
      $$\langle\; get(t)\;\vert\; \sigma \cup \lbrace t \rbrace \;\rangle \longrightarrow \langle\; E \;\vert\; \sigma \;\rangle$$\\
\\ \\ 
      {\bf(N)}        
& 
      $$\dfrac{t \not\in \sigma}{\langle\; nask(t)\;\vert\; \sigma \;\rangle \longrightarrow \langle\; E \;\vert\; \sigma \;\rangle}$$\\
\end{array}
\] 
\end{scriptsize}
\end{minipage}
\end{center}

\caption{Transition rules for token-based primitives (Bach)}
\label{jmj-fig-bach-based-primitives}
\end{figure}

\begin{figure}[t]
\begin{center}
\begin{minipage}{9cm}
\begin{scriptsize}
\[ \begin{array}{r@{\hspace*{0.25cm}}c}
     {\bf(T_d)}
& 
    $$\dfrac{m \in \mathbb{N}_{0}}{\langle \; tell(t(m)) \; \vert \; \sigma \;\rangle \longrightarrow \langle \; E \;\vert \; \sigma \cup \lbrace t(m)\rbrace \;\rangle}$$\\ 
\\ \\ 
     {\bf(A_d)}        
& 
     $$\dfrac{m \in \mathbb{N}_{0}}{\langle \; ask(t(m)) \;\vert\; \sigma \cup \lbrace t(m) \rbrace \;\rangle \longrightarrow \langle\; E \;\vert\; \sigma \cup \lbrace t(m)\rbrace \;\rangle}$$\\ 
\\ \\ 
     {\bf(G_d)}        
& 
      $$\dfrac{m \in \mathbb{N}_{0}}{\langle\; get(t(m))\;\vert\; \sigma \cup \lbrace t(m) \rbrace \;\rangle \longrightarrow \langle\; E \;\vert\; \sigma \;\rangle}$$\\
\\ \\ 
      {\bf(N_d)}        
& 
      $$\dfrac{n < m}{\langle\; nask(t(m))\;\vert\; \sigma \uplus \lbrace t(n) \rbrace \;\rangle \longrightarrow \langle\; E \;\vert\; \sigma \uplus \lbrace t(n) \rbrace \;\rangle}$$\\
\end{array}
\] 
\end{scriptsize}
\end{minipage}
\end{center}

\caption{Transition rules for dense token-based primitives (Dense Bach)}
\label{jmj-fig-token-based-primitives}
\end{figure}

These executions can be formalized by the transition steps of
figures~\ref{jmj-fig-bach-based-primitives} and
\ref{jmj-fig-token-based-primitives}, where configurations are pairs
of instructions, for the moment reduced to simple primitives, coupled
to the contents of a store. Note that $E$ is used to denote a
terminated computation. As can be seen by the above description, the
primitives of Bach are those of Dense Bach with a density of
1. Consequently, our explanation starts by the more general rules of
figure~\ref{jmj-fig-token-based-primitives}.  Rule $(T_d)$ states that
for any store $\sigma$ and any token $t$ with density $m$, the effect
of the tell primitive is to enrich the current set of tokens by $m$
occurrences of token $t$. Note that $\cup$ denotes multi-set
union. Rules $(A_d)$ and $(G_d)$ specify the effect of ask and get
primitives, both requiring the presence of at least $m$ occurrences of
$t$, but the latter also consuming them. Rule $(N_d)$ defines the nask
primitive, which tests for the absence of $m$ occurrences of $t$. Note
that there might be some provided there are less than $m$.  It is also
worth observing that thanks to the notation $\sigma \uplus \lbrace
t(n)\rbrace$ one is sure that $t$ does not occur in $\sigma$ and
consequently that there are exactly $n$ occurrences of $t$. This does
not apply for rules $(A_d)$ and $(G_d)$ for which it is sufficient to
assume the presence of at least $m$ occurrences, allowing $\sigma$ to
contain others.

Figure~\ref{jmj-fig-bach-based-primitives} specifies the transition
rules for the primitives of the Bach language. As expected, they
amount to the rules of Figure~\ref{jmj-fig-token-based-primitives}
where the density $m$ is taken to be 1 and the union symbol is interpreted
on multi-sets.

\subsubsection{Dense Bach with distributed density}

A natural extension is to replace a token by a set
 of tokens and to
distribute the density requirements on tokens. For instance, the
primitive $ask([t,u,v](6))$ succeeds on a store containing one
occurrence of $t$, two of $u$ and three of $v$. Dually, the computation of
$tell([t,u,v](6))$ may result in adding two occurrences of $t$ on the
store, three of $u$ and one of $v$. The following definitions formalize
this intuition.

\begin{mydefinition}
Let $Snlt$ denote the set of non-empty sets 
 of tokens in which, for
simplicity purposes, each token differs from the others.  Such a set 
is typically denoted as $L = [t_1, \ldots, t_p]$ and is thus such that
\(t_i \neq t_j\) for $i \not= j$. Define a dense set
 of tokens
as a set
 of $Snlt$ associated with a positive integer. Such
a dense set
 is typically represented as $L(m)$, with $L$ the set 
  of
tokens and $m$ an integer.
\end{mydefinition}

The distribution of the density over a set 
 of tokens is formalized
through the following distribution function.

\begin{mydefinition}
Define the distribution of tokens from dense sets 
 of tokens to sets of tuples of dense tokens as follows:
\[
     {\cal{D}}([t_1,\cdots,t_p](m)) 
     =  
     \lbrace  (t_{1} (m_{1}), \cdots, t_{p}(m_p)) : m_1 + \cdots + m_p = m \rbrace 
\]
Note that, thanks to the definition of dense tokens, we assume above
that the $m_{i}$'s are positive integers.  For the sake of simplicity,
we shall call the set ${\cal{D}}([t_1,\cdots,t_p](m))$ the
distribution of $m$ over $[t_1,\cdots,t_p]$.
\end{mydefinition}

The distribution of an integer $m$ over a set 
 of tokens $L$ has the
potential to express the behavior of the extended primitives. Indeed,
telling a dense set 
 amounts to telling atomically the $t_i(m_i)$'s of
a tuple defined above. Asking or getting a dense set 
 requires to
check that a tuple of ${\cal{D}}([t_1,\cdots,t_p](m))$ is
present on the considered store. For the negative ask, the requirement
is that none of the tuple is present. For the ease of writing and to
make this latter concept clear, we introduce the following concept of
intersection.

\begin{mydefinition}
Let $m$ be a positive integer,  $L=[t_1,\cdots,t_p]$ be a set 
 of tokens and $\sigma$ a store. We define 
\( {\cal{D}}(L(m)) \sqcap \sigma \) as the following set of tuples of dense tokens :
\[ {\cal{D}}(L(m)) \sqcap \sigma
   = 
   \lbrace (t_{1}(m_{1}), \ldots, t_{p}(m_p)) \in {\cal{D}}(L(m)) : \forall i \{ t_i (m_i) \} \subseteq \sigma
   \rbrace
\]
\end{mydefinition}

From an implementation point of view, it is worth observing that one
may give a syntactical characterization of the emptyness of such an
intersection.

\begin{mydefinition}
Given a store $\sigma$ and a dense set 
 $L(m)$, with
$L=[t_1,\cdots,t_p]$, we denote by $Max(\sigma,L(m))$ the tuple
$(t_{1}(m_{1}), \ldots, t_{p}(m_p))$ where the $m_i$'s denote the
number of occurrences of $t_i$ in $\sigma$. Moreover, we denote by
$SMax(\sigma,L(m))$ the sum
\( m_1 + \cdots + m_p
\)
\end{mydefinition}

It is easy to establish the following proposition.

\begin{sloppypar}
\begin{theo}
For any dense set 
 of tokens $L(m)$ and any store $\sigma$, one has
\( {\cal{D}}(L(m)) \sqcap \sigma = \emptyset
\)
iff 
\( SMax(\sigma,L(m)) < m
\).
\end{theo}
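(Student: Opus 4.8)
The plan is to prove the logically equivalent contrapositive form, namely that \( {\cal{D}}(L(m)) \sqcap \sigma \neq \emptyset \) if and only if \( SMax(\sigma,L(m)) \geq m \). Throughout I would write \(n_i\) for the number of occurrences of \(t_i\) in \(\sigma\), so that by definition \( Max(\sigma,L(m)) = (t_1(n_1), \ldots, t_p(n_p)) \) and \( SMax(\sigma,L(m)) = n_1 + \cdots + n_p \). The key preliminary observation, obtained by unfolding the definitions of \({\cal{D}}\) and of \(\sqcap\), is that a tuple \( (t_1(m_1), \ldots, t_p(m_p)) \) belongs to \( {\cal{D}}(L(m)) \sqcap \sigma \) exactly when the \(m_i\) are positive integers satisfying \( m_i \leq n_i \) for every \(i\) together with \( m_1 + \cdots + m_p = m \); here \( m_i \leq n_i \) is simply the restatement of \( \{ t_i(m_i) \} \subseteq \sigma \). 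Thus the whole proposition reduces to an elementary question about expressing \(m\) as a constrained sum of positive integers.

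For the easy direction I would argue by contraposition. Assume \( {\cal{D}}(L(m)) \sqcap \sigma \neq \emptyset \) and pick any witnessing tuple \( (t_1(m_1), \ldots, t_p(m_p)) \). Since each \( m_i \leq n_i \), summing these inequalities gives \( m = \sum_i m_i \leq \sum_i n_i = SMax(\sigma,L(m)) \), so \( SMax \geq m \). Equivalently, if \( SMax < m \) then no tuple of \( {\cal{D}}(L(m)) \) can have all of its components bounded by the corresponding \(n_i\), hence the intersection is empty. This half is a pure bounding argument and presents no difficulty.

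The converse implication, \( SMax \geq m \Rightarrow {\cal{D}}(L(m)) \sqcap \sigma \neq \emptyset \), is the substantive step and I expect it to be the main obstacle, since it is constructive rather than a mere inequality. Starting from the tuple \( Max(\sigma,L(m)) = (t_1(n_1), \ldots, t_p(n_p)) \), whose component sum is \( SMax \geq m \), I would decrease the entries one coordinate at a time until the total drops to exactly \(m\), never letting any coordinate fall below \(1\). Because the admissible values of each coordinate form the integer box \( 1 \leq m_i \leq n_i \), the attainable values of \( \sum_i m_i \) are precisely the integers of the interval \( [\,p, SMax\,] \), so the target \(m\) is reached whenever it lies in that range; the resulting tuple then lies in \( {\cal{D}}(L(m)) \sqcap \sigma \) by construction.

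The delicate point in this construction is maintaining the positivity of every coordinate: the decrementing procedure succeeds only when \(m \geq p\) and each token of \(L\) genuinely occurs in \(\sigma\) (each \(n_i \geq 1\), so that \( Max(\sigma,L(m)) \) is indeed a tuple of dense tokens). This is exactly the regime in which \( L(m) \) is meaningful, since \( {\cal{D}}(L(m)) \) is itself empty as soon as \( m < p \). I would therefore make this assumption explicit, or treat the degenerate case \( m < p \) separately, after which the reduction-from-\(Max\) argument closes the proof.
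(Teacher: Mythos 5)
Your core argument is correct, and it is the honest version of what the paper waves away: the paper's entire proof of this theorem reads ``Simple verification.'' Your reduction of membership in ${\cal D}(L(m)) \sqcap \sigma$ to the existence of positive integers $m_i \leq n_i$ with $m_1 + \cdots + m_p = m$, the summation bound for one direction, and the interval-of-attainable-sums argument for the converse are exactly the right ingredients. However, what you downplay as ``the delicate point in this construction'' is not a defect of your construction; it is a counterexample to the theorem as stated. Take $L = [a,b]$, $m = 2$ and $\sigma = \{a,a\}$: then ${\cal D}(L(2)) = \{(a(1),b(1))\}$, whose unique tuple requires an occurrence of $b$, so ${\cal D}(L(2)) \sqcap \sigma = \emptyset$, while $SMax(\sigma,L(2)) = 2 \not< 2$. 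Likewise $L = [a,b]$, $m = 1$, $\sigma = \{a\}$ gives an empty distribution (hence an empty intersection) with $SMax = 1 \not< 1$. So the implication ``intersection empty $\Rightarrow SMax(\sigma,L(m)) < m$'' fails whenever some token of $L$ is missing from $\sigma$ or $m < p$, and no separate treatment of these cases can rescue the biconditional: it simply does not hold there. The correct characterization is that ${\cal D}(L(m)) \sqcap \sigma = \emptyset$ iff $SMax(\sigma,L(m)) < m$, or $m < p$, or some $t_i$ has no occurrence in $\sigma$; equivalently, the theorem as printed is true only under the extra hypotheses you identify ($m \geq p$ and every $t_i$ occurring in $\sigma$). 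You should assert this plainly --- that is, flag the statement (and its intended use as a syntactic emptiness test for the nask rule) as needing correction --- rather than presenting the restriction as a gap you still owe; with those hypotheses added, your two-direction argument is complete and needs nothing further.
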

\end{sloppypar}

\begin{proof}
Simple verification.
\end{proof}

We are now in a position to specify the language extension handling dense sets 
 of tokens.

\begin{mydefinition}
Define the set of dense sets 
 primitives ${\cal T}_{dbd}$
as the set of primitives $T_{dbd}$ generated by the following grammar:
\begin{eqnarray*}
T_{dbd} & ::= &
    tell(L(m)) \ | \ ask(L(m))\ | \ get(L(m)) \ | \ nask(L(m)) 
\end{eqnarray*}
where $L(m)$ represents a dense set 
 of tokens.
\end{mydefinition}

The transition steps for these primitives are defined in
figure~\ref{fig-bach-based-primitives}. As suggested above, rule
$(T_{dbd})$ specifies that telling a dense set 
 $L(m)$ of tokens
amounts to atomically add the multiple occurrences $t_i(m_i)$'s of the
tokens of a tuple of the distribution of $m$ over $L$. Note that the
selected tuple is chosen non-deterministically, which gives to a tell
primitive a non-deterministic behavior as opposed to the tell
primitives of Bach and Dense Bach. Rule $(A_{dbd})$ states that
asking for the dense set 
 $L(m)$ amounts to testing that a tuple of
the distribution of $m$ over $L$ is in the store, which is technically
stated through the non-emptyness of the intersection of the
distribution and the store. Rule $(G_{dbd})$ requires that the tokens
of the tuples are removed in the considered multiplicity. Finally,
rule $(N_{dbd})$ specifies that negatively asking $L(m)$ succeeds if
$m$ is strictly positive and no tuple of the distribution of $m$ over
$L$ is present on the current store.

\begin{figure}[t]
\begin{center}
\begin{minipage}{9cm}
\begin{scriptsize}
\[ \begin{array}{r@{\hspace*{0.25cm}}c}
     {\bf(T_{dbd})}
& 
    $$\dfrac{ (t_1(m_1), \cdots, t_p(m_p)) \in {\cal D}(L(m))
           }{\langle \; tell(L(m)) \; \vert \; \sigma \;\rangle \longrightarrow 
             \langle \; E \;\vert \; \sigma \cup \{ t_1(m_1), \cdots, t_p(m_p) \}  \;\rangle
            }
    $$\\ 
\\ \\ 
     {\bf(A_{dbd})}        
& 
     $$\dfrac{ {\cal{D}}(L(m)) \sqcap \sigma \not=\emptyset
            }{ \langle \; ask(L(m)) \;\vert\; \sigma \;\rangle \longrightarrow 
               \langle\; E \;\vert\; \sigma \;\rangle
             }
     $$\\ 
\\ \\ 
     {\bf(G_{dbd})}        
& 
      $$\dfrac{ (t_1(m_1), \cdots, t_p(m_p)) \in {\cal D}(L(m))
             }{ \langle\; get(L(m))\;\vert\; \sigma \cup \lbrace t_1(m_1), \cdots, t_p(m_p) \rbrace \;\rangle 
                \longrightarrow \langle\; E \;\vert\; \sigma \;\rangle
              }
     $$\\
\\ \\ 
      {\bf(N_{dbd})}        
& 
      $$\dfrac{ m>0 \mbox{ and } {\cal{D}}(L(m)) \sqcap \sigma = \emptyset
             }{ \langle\; nask(L(m))\;\vert\; \sigma \;\rangle \longrightarrow \langle\; E \;\vert\; \sigma \;\rangle
              }
      $$\\
\\ \\

\end{array}
\] 
\end{scriptsize}
\end{minipage}
\end{center}

\caption{Transition rules for set of token-based primitives (Dense Bach with distributed Density)}
\label{fig-bach-based-primitives}
\end{figure}

\subsection{Languages}

We are now in a position to define the languages we shall consider.
The statements of these languages, also called {\em agents}, are
defined from the tell, ask, get and nask primitives by possibly
combining them by the classical choice operator $+$, used among others
in CCS, parallel operator (denoted by the $\parac$ symbol) and the
sequential operator (denoted by the $\seqc$ symbol). The formal
definition is as follows. 

\begin{mydefinition}
\label{jmj-lg-def}
\label{jmj-def-lg}
Define the Bach language ${{\cal L}_{B}}$ as the set of agents $A$ 
generated by the following grammar:
\begin{eqnarray*}
A & ::= &
    T \ | \ A\seqc A \ | \ A\parac A \ | \ A \choice A 
\end{eqnarray*}
where $T$ represents a token-based primitive. 
Define the Dense Bach language ${{\cal L}_{DB}}$ similarly but by taking dense
token-based primitives $T_d$:
\begin{eqnarray*}
A_d & ::= &
    T_d \ | \ A_d\seqc A_d \ | \ A_d\parac A_d \ | \ A_d \choice A_d 
\end{eqnarray*}
Define the Dense Bach with distributed Density language ${{\cal L}_{DBD}}$ similarly but by taking lists of
token-based primitives $T_{dbd}$:
\begin{eqnarray*}
A_{dbd} & ::= &
    T_{dbd} \ | \ A_{dbd}\seqc A_{dbd} \ | \ A_{dbd}\parac A_{dbd} \ | \ A_{dbd} \choice A_{dbd} 
\end{eqnarray*}
Subsequently, we shall consider sublanguages formed similarly but by
considering only subsets of these primitives. In that case, if
$\mathcal{H}$ denotes such a subset, then we shall write the induced
sublanguages as $\llg{${\cal H}$}$, $\llgBD{${\cal H}$}$, and
$\llgBDD{${\cal H}$}$ respectively. Note that for the latter
sublanguages, the tell, ask, nask and get primitives are associated
with the basic pairs described above.
\end{mydefinition}

\subsection{Transition system}

To study the expressiveness of the languages, a semantics needs to be
defined. As suggested in the previous subsections, we shall use an
operational one, based on transition systems.  For each transition
system, the configuration consists of agents (summarizing the current
state of the agents running on the store) and a multi-set of tokens
(denoting the current state of the store). In order to express the
termination of the computation of an agent, we extend the set of
agents by adding a special terminating symbol $E$ that can be seen as
a completely computed agent. For uniformity purpose, we abuse the
language by qualifying $E$ as an agent. To meet the intuition, we
shall always rewrite agents of the form ($E ; A$), ($E \parac A$) and
($A \parac E$) as $A$. This is technically achieved by defining the
extended sets of agents as ${{\cal L}_{B}} \cup \{E\}$, ${{\cal
    L}_{DB}} \cup \{E\}$ or ${{\cal L}_{DBD}} \cup \{E\}$ and by
justifying the simplifications by imposing a bimonoid structure.

\begin{figure}[t]
\begin{center}
\begin{minipage}{9cm}
\begin{scriptsize}
\[ \begin{array}{r@{\hspace*{0.25cm}}c}
     {\bf(S) }
   & 
     \infrulemath{  \transm{ \conf{ A }{ \sigma } 
                      }{ \conf{ A' }{ \sigma' } 
                       } 
            }{ 
                \transm{ \conf{ \seqcc{A}{B} }{ \sigma } 
                      }{ \conf{ \seqcc{A'}{B} }{ \sigma' } 
                       } 
              } 
   \\ \\ 
     {\bf(P) }
   & 
     \infrulemath{  \transm{ \conf{ A }{ \sigma } 
                      }{ \conf{ A' }{ \sigma' } 
                       } 
            }{ 
                 \begin{array}{c} 
                     \transm{ \conf{ \paracc{A}{B} }{ \sigma } 
                           }{ \conf{ \paracc{A'}{B} }{ \sigma' } 
                            } 
                  \\ 
                     \transm{ \conf{ \paracc{B}{A} }{ \sigma } 
                           }{ \conf{ \paracc{B}{A'} }{ \sigma' } 
                            } 
                  \end{array} 
              } 
    \\ \\ 
     {\bf(C) }
   & 
     \infrulemath{  \transm{ \conf{ A }{ \sigma } 
                       }{ \conf{ A' }{ \sigma' } 
                        } 
            }{ 
                 \begin{array}{c} 
                     \transm{ \conf{ \choicec{A}{B} }{ \sigma } 
                           }{ \conf{ A' }{ \sigma' } 
                            } 
                  \\ 
                     \transm{ \conf{ \choicec{B}{A} }{ \sigma } 
                           }{ \conf{ A' }{ \sigma' } 
                            } 
                  \end{array} 
              } 
\end{array} \]
\end{scriptsize}
\end{minipage}
\end{center}

\caption{Transition rules for the operators
\label{fig-operators}}
\end{figure}

The rules for the primitives of the languages have been given in
Figures~\ref{jmj-fig-bach-based-primitives} to \ref{fig-bach-based-primitives}. 
Figure~\ref{fig-operators} details the
usual rules for sequential composition, parallel composition,
interpreted in an interleaving fashion, and CCS-like choice.

\subsection{Observables and operational semantics}

We are now in a position to define what we want to observe from the
computations. Following previous work by some of the authors (see eg
\cite{Br-Ja-COORD99,BrJa-SCP00,Jacquet-Linden-Coord04,
Jacquet-Linden-Foclasa04,Jacquet-Linden-Foclasa03}),
we shall actually take an operational semantics recording the final
state of the computations, this being understood as the final store
coupled to a mark indicating whether the considered computation is
successful or not. Such marks are respectively denoted as $\delta^{+}$
(for the successful computations) and $\delta^{-}$ (for failed
computations).

\begin{mydefinition} 
\mbox{}

\begin{enumerate}

\item

Define the set of stores Sstore as the set of
finite multisets with elements from Stoken.

\item Let $\delta^{+}$ and $\delta^{-}$ be two fresh symbols
  denoting respectively success and failure. Define the set of
  histories Shist as the cartesian product \( Sstore \times
  \{\delta^{+},\delta^{-}\} \).

\item For each language ${\cal L}_I$ of the languages ${\cal L}_{B}$, ${\cal L}_{DB}$, ${\cal L}_{DBD}$,
define the operational semantics 
$ {\cal O}_I : {\cal L}_I \rightarrow {\cal P}(Shist)$
 as the following function: for any agent $A \in {\cal L}$
\begin{eqnarray*}
  {\cal O}(A)  & = &  \{(\sigma, \delta^{+}): \langle A | \emptyset \rangle \rightarrow^{*} \langle E | \sigma \rangle \}   \\
        &   & \mbox{} \cup \{(\sigma, \delta^{-}): \langle A | \emptyset \rangle \rightarrow^{*} \langle B | \sigma \rangle \nrightarrow, B \not= E \} 
\end{eqnarray*}

\end{enumerate}
\end{mydefinition}


\section{Comparison of Dense Bach and Dense Bach with Distributed Density}
\label{compDBandDBD}

This section focusses on the comparison between the Dense Bach language and the newly introduced
Dense Bach with Distrbuted Density language.

\subsection{Modular embedding}

\begin{figure}[t]
\begin{center}
\begin{scriptsize}
\begin{picture}(25,25)(8,8)

\gasset{Nadjust=w,Nadjustdist=2,Nh=8,fillcolor=White,Nfill=y,Nframe=n,ilength=7}

\node(Lprim)(10,30){${\mathcal L}'$}
\node(L)(10,10){${\mathcal L}$}
\node(Oprim)(30,30){${\cal O}'_{s}$}
\node(O)(30,10){${\cal O}_{s}$}

\drawedge[ELside=r](Lprim,L){${\cal C}$}
\drawedge(Lprim,Oprim){${\cal S}'$}
\drawedge[ELside=r](O,Oprim){${\cal D}_c$}
\drawedge(L,O){${\cal S}$}

\end{picture}
\end{scriptsize}
\end{center}
\caption{Basic embedding.\label{fig-embedding}}
\end{figure}

A natural way to compare the expressive
power of two languages is to determine whether all programs written in one
language can be easily and equivalently translated into the other
language, where equivalent is intended in the sense of conserving the same 
observable behaviors.

According to this intuition, Shapiro introduced in \cite{Shapiro} a
first notion of embedding as follows. Consider two
languages \(\mathcal{L}\) and \(\mathcal{L'}\). Assume given the
semantics mappings (\textit{Observation criteria}) \(\mathcal{S} :
\mathcal{L} \rightarrow \mathcal{O}_s \) and \(\mathcal{S'} :
\mathcal{L'} \rightarrow \mathcal{O'}_s \), where \(\mathcal{O}_s\) and
\(\mathcal{O'}_s\) are on some suitable domains. Then \( \mathcal{L} \) can
\textit{embed} \( \mathcal{L'} \) if there exists a mapping \(
\mathcal{C} \) (coder) from the statements of \( \mathcal{L'} \) to
the statements of \( \mathcal{L} \), and a mapping \( \mathcal{D}_c \)
(decoder) from \( \mathcal{O}_s \) to \( \mathcal{O'}_s \), such that the
diagram of Figure~\ref{fig-embedding} commutes, namely such that for
every statement \(A \in \mathcal{L'} :
\mathcal{D}_c(\mathcal{S}(\mathcal{C}(A))) = \mathcal{S'}(A) \).

This basic notion of embedding turns out however to be too weak since,
for instance, the above equation is satisfied by any pair of
Turing-complete languages. de Boer and Palamidessi hence proposed in
\cite{DB-PA-CONCUR-94} to add three constraints on the coder \(
\mathcal{C} \) and on the decoder \( \mathcal{D}_c \) in order to obtain
a notion of \textit{modular} embedding usable for concurrent
languages:

\begin{enumerate}

\item \(\mathcal{D}_c\) should be defined in an element-wise way with respect to \(\mathcal{O}_s\),
namely for some appropriate mapping \(\mathcal{D}_{el} \)
\begin{labeleqn}{\mbox{($P_1$)}}
\forall X \in {\cal O}_s: \ 
{\cal D}_c(X) = \{{\cal D}_{el}(x) \mid x \in X \}
\end{labeleqn}


\item the coder \(\mathcal{C}\) should be defined in a compositional way with respect to 
the sequential, parallel and choice operators:
\begin{labeleqn}{\mbox{($P_2$)}}
\begin{array}{c}
{\cal C}(A \seqc B) = {\cal C}(A) \seqc {\cal C}(B) \\
{\cal C}(A \parac B) = {\cal C}(A) \parac {\cal C}(B) \\
{\cal C}(A \choice B) = {\cal C}(A) \choice {\cal C}(B) \\
\end{array}
\end{labeleqn}


\item the embedding should preserve the behavior of the original processes with respect to deadlock, 
failure and success (\textit{termination invariance}):\\
\begin{labeleqn}{\mbox{$(P_3$)}}
\forall X \in {\cal O}_s, \forall x \in X: \ 
tm'({\cal D}_{el}(x)) = tm (x)
\end{labeleqn}%
%
where \textit{tm} and \textit{tm'} extract the termination information from the observables 
of \(\mathcal{L}\) and \(\mathcal{L'}\), respectively.
\end{enumerate} 

An embedding is then called \textit{modular} if it satisfies
properties \(\mathit{P}_{1}\), \(\mathit{P}_{2}\), and
\(\mathit{P}_{3}\).  The existence of a modular embedding from
\(\mathcal{L'}\) into \(\mathcal{L}\) is subsequently denoted by
\(\mathcal{L'} \leq \mathcal{L}\). It is easy to prove that \( \leq \)
is a pre-order relation. Moreover if \(\mathcal{L'} \subseteq
\mathcal{L}\) then \(\mathcal{L'} \leq \mathcal{L} \) that is, any
language embeds all its sublanguages. This property descends
immediately from the definition of embedding, by setting \(
\mathcal{C} \) and \( \mathcal{D}_c \) equal to the identity function.

\subsection{Formal propositions and proofs}

Let us now turn to the formal proofs. As a first result, thanks to the
fact that any language contains its sublanguages, a number of modular
embeddings are directly established.  In subsequent proofs, this is
referred to by {\em language inclusion}.

\begin{myproposition} 
\label{dda-prop-1a}
\( \llgBDD{ \(\psi\)} \leq \llgBDD{ \(\chi\)} \), for any subsets of
$\psi, \chi$ of primitives such that $\psi \subseteq \chi$.
\end{myproposition}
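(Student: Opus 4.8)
The plan is to reduce this statement to the general observation, already recorded just above the proposition, that any language modularly embeds each of its sublanguages through the identity coder and decoder. The first step is therefore to promote the hypothesis $\psi \subseteq \chi$ on \emph{primitives} to an inclusion \(\llgBDD{ \(\psi\)} \subseteq \llgBDD{ \(\chi\)}\) on \emph{agents}. This is immediate from the grammar of Definition~\ref{jmj-def-lg}: the productions generating the two sublanguages are identical except that the base case ranges over the primitives of $\psi$ in one case and of $\chi$ in the other. Hence every agent built from $\psi$-primitives and the operators $\seqc$, $\parac$, $\choice$ is syntactically also an agent built from $\chi$-primitives, which yields the claimed set inclusion.

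Once this inclusion holds, I would instantiate the embedding of Figure~\ref{fig-embedding} by taking the coder $\mathcal{C}$ to be the identity injection of \(\llgBDD{ \(\psi\)}\) into \(\llgBDD{ \(\chi\)}\), the element-wise decoder $\mathcal{D}_{el}$ to be the identity on histories, and hence $\mathcal{D}_c$ the identity on sets of histories. Checking the three modularity conditions is then routine: $P_1$ holds because $\mathcal{D}_c$ is by construction the element-wise lift of $\mathcal{D}_{el}$; $P_2$ holds because the identity trivially commutes with $\seqc$, $\parac$ and $\choice$; and $P_3$ holds because $\mathcal{D}_{el}$ leaves each history, and in particular its success/failure mark $\delta^{+}$ or $\delta^{-}$, untouched, so that \(tm'(\mathcal{D}_{el}(x)) = tm(x)\).

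The one point deserving genuine care — and the only place I expect any friction — is to justify that the diagram actually commutes, i.e.\ that \(\mathcal{D}_c(\mathcal{S}(\mathcal{C}(A))) = \mathcal{S}'(A)\) for every \(A \in \llgBDD{ \(\psi\)}\). With the identity coder and decoder this reduces to showing that the operational semantics of $A$ does not depend on whether $A$ is read as an agent of the smaller or of the larger sublanguage. Here I would argue that the transition rules of Figures~\ref{fig-bach-based-primitives} and~\ref{fig-operators} are \emph{local}: each rule fires on the syntactic form of the leading primitive or operator and on the store alone, never inspecting the ambient set of available primitives. Consequently an agent using only $\psi$-primitives has exactly the same derivations, and hence the same terminating configurations $\langle E \mid \sigma\rangle$ and blocked configurations $\langle B \mid \sigma\rangle \nrightarrow$, in both languages; the two semantics therefore agree on $A$ and the diagram commutes. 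This establishes the modular embedding and, with it, the proposition.
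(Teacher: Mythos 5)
Your proposal is correct and follows exactly the route the paper takes: the paper justifies this proposition by the general observation (stated in the modular embedding subsection and referred to as \emph{language inclusion}) that \(\mathcal{L}' \subseteq \mathcal{L}\) implies \(\mathcal{L}' \leq \mathcal{L}\) via identity coder and decoder, which is precisely your argument. The extra care you devote to checking that the operational semantics is insensitive to the ambient primitive set is a detail the paper leaves implicit, but it is the same proof.
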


A second observation is that Dense Bach primitives are deduced from
the primitives of Dense Bach with Distributed Density by taking dense
sets 
 with only one token, the density being the same.  As a result
Dense Bach sublanguages are embedded in the corresponding Dense Bach
with Distributed Density sublanguages.

\begin{myproposition} 
\label{dda-prop-2a}
\( \llgBD{  \(\chi\)}   \leq   \llgBDD{  \(\chi\)}  \), 
for any subset of $\chi$ of primitives. 
\end{myproposition}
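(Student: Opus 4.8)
The plan is to construct an explicit modular embedding of \( \llgBD{\(\chi\)} \) into \( \llgBDD{\(\chi\)} \) and to verify the three conditions $(P_1)$, $(P_2)$ and $(P_3)$. The coder $\mathcal{C}$ is defined on primitives by replacing each dense token by the corresponding singleton set while keeping the density, namely $\mathcal{C}(tell(t(m))) = tell([t](m))$ and, in the same way, $\mathcal{C}(ask(t(m))) = ask([t](m))$, $\mathcal{C}(get(t(m))) = get([t](m))$ and $\mathcal{C}(nask(t(m))) = nask([t](m))$. Since the image of a $\chi$-primitive uses the same primitive name, $\mathcal{C}$ indeed maps \( \llgBD{\(\chi\)} \) into \( \llgBDD{\(\chi\)} \). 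I then extend $\mathcal{C}$ to compound agents by compositionality, $\mathcal{C}(A \seqc B) = \mathcal{C}(A) \seqc \mathcal{C}(B)$, $\mathcal{C}(A \parac B) = \mathcal{C}(A) \parac \mathcal{C}(B)$, $\mathcal{C}(A \choice B) = \mathcal{C}(A) \choice \mathcal{C}(B)$, and $\mathcal{C}(E) = E$. As stores are finite multisets of tokens in both languages, the two observable domains coincide, so I take the decoder $\mathcal{D}_c$ to be the identity, induced element-wise by $\mathcal{D}_{el} = \mathit{id}$. With these choices $(P_1)$ holds since $\mathcal{D}_c$ is element-wise by construction, $(P_2)$ holds since $\mathcal{C}$ is compositional by definition, and $(P_3)$ holds trivially because the identity decoder preserves the termination marks $\delta^{+}$ and $\delta^{-}$.

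The real content is the commutation of the diagram of Figure~\ref{fig-embedding}, which --- the decoder being the identity --- reduces to proving ${\cal O}_{DBD}(\mathcal{C}(A)) = {\cal O}_{DB}(A)$ for every agent $A$. The key observation is that for a singleton set $[t]$ the distribution is a single tuple, $\mathcal{D}([t](m)) = \{ (t(m)) \}$, because $m = m_1$ is the only decomposition. Hence the distributed-density rules collapse exactly onto the Dense Bach rules: instantiated at this unique tuple, rule $(T_{dbd})$ coincides with $(T_d)$; the condition $\mathcal{D}([t](m)) \sqcap \sigma \neq \emptyset$ holds iff $\sigma$ contains at least $m$ occurrences of $t$, so $(A_{dbd})$ and $(G_{dbd})$ reproduce $(A_d)$ and $(G_d)$; and $\mathcal{D}([t](m)) \sqcap \sigma = \emptyset$ together with $m > 0$ holds iff $\sigma$ contains strictly fewer than $m$ occurrences of $t$, so $(N_{dbd})$ reproduces $(N_d)$. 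For this last case the preceding theorem, characterising emptiness of the intersection through $SMax(\sigma,[t](m)) < m$, can be invoked, although for a single token the equivalence is immediate.

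On this basis I would establish a step-by-step correspondence, namely that $\langle A \mid \sigma \rangle \longrightarrow \langle A' \mid \sigma' \rangle$ holds in \( \llgBD{\(\chi\)} \) iff $\langle \mathcal{C}(A) \mid \sigma \rangle \longrightarrow \langle \mathcal{C}(A') \mid \sigma' \rangle$ holds in \( \llgBDD{\(\chi\)} \), proved by induction on the height of the derivation. The base cases are exactly the primitive transitions discussed above, and the inductive cases for $\seqc$, $\parac$ and $\choice$ are immediate, since the operator rules $(S)$, $(P)$ and $(C)$ of Figure~\ref{fig-operators} are literally the same in both languages and $\mathcal{C}$ commutes with the operators. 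Iterating this correspondence along $\longrightarrow^{*}$, and noting that $\mathcal{C}(A) = E$ precisely when $A = E$ so that terminal successful and failed configurations match with identical final stores, yields the desired equality of observables. The step I expect to require the most care is the precise matching of the side conditions in the $ask$, $get$ and $nask$ rules --- in particular that the multiset containment $\{ t(m) \} \subseteq \sigma$ underlying $\sqcap$ really does express ``at least $m$ occurrences of $t$'', and that the strict positivity $m > 0$ demanded by $(N_{dbd})$ agrees with the density domain $\mathbb{N}_0$ used in $(N_d)$; the remainder of the argument is routine.
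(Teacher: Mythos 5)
Your proposal is correct and follows exactly the paper's approach: the paper's entire proof consists of defining the same singleton-set coder ($\mathcal{C}(tell(t(m))) = tell([t](m))$, etc.) and declaring the rest ``immediate.'' Your additional work --- the identity decoder, the observation that $\mathcal{D}([t](m))$ is a singleton so the $(\cdot_{dbd})$ rules collapse onto the $(\cdot_d)$ rules, and the induction establishing the step-by-step correspondence --- is precisely the verification the paper leaves implicit, and it is sound.
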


\begin{proof}
Immediate by defining the coder as follows:
\[ \begin{array}{llll}
\begin{array}{rcl}
    \coder(tell(t(m)))  & = & tell([t](m)) 
\\
    \coder(ask(t(m))) & = & ask([t](m))
\end{array}
& \hspace*{2cm} & 
\begin{array}{rclr}
    \coder(get(t(m))) & = & get([t](m)) &
\\
    \coder(nask(t(m)))  & = & nask([t](m)) & \hspace*{2cm} 
\end{array}
\end{array}
\]
\end{proof}

\begin{myproposition} 
\label{dda-prop-3a} 
\(\llgBD{tell}\) and \(\llgBDD{tell}\) are equivalent.
\end{myproposition}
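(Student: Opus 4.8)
The statement asserts an equivalence, which in the present framework means mutual modular embedding: \(\llgBDD{tell} \leq \llgBD{tell}\) and \(\llgBD{tell} \leq \llgBDD{tell}\). One direction is already in hand: instantiating Proposition~\ref{dda-prop-2a} with \(\chi = \{tell\}\) yields \(\llgBD{tell} \leq \llgBDD{tell}\) through the coder \(tell(t(m)) \mapsto tell([t](m))\). So the real work is the reverse embedding \(\llgBDD{tell} \leq \llgBD{tell}\), and here the essential difficulty is that a single \(tell(L(m))\) behaves \emph{non-deterministically} (rule \((T_{dbd})\) may add any tuple of \(\mathcal{D}(L(m))\)), whereas every Dense Bach tell is deterministic.

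The plan is to recover this non-determinism with the choice operator. Writing \(L = [t_1, \ldots, t_p]\) and enumerating the finitely many tuples \((t_1(m_1^{(j)}), \ldots, t_p(m_p^{(j)}))\), \(1 \le j \le k\), of \(\mathcal{D}(L(m))\), I would set
\[
\coder(tell(L(m))) = \bigl( tell(t_1(m_1^{(1)})) \parac \cdots \parac tell(t_p(m_p^{(1)})) \bigr) \choice \cdots \choice \bigl( tell(t_1(m_1^{(k)})) \parac \cdots \parac tell(t_p(m_p^{(k)})) \bigr),
\]
and extend \(\coder\) homomorphically over \(\seqc\), \(\parac\), \(\choice\), so that \((P_2)\) holds by construction. (Replacing the inner \(\parac\) by \(\seqc\) would serve equally well, since in a pure-tell fragment the order in which tokens are deposited is irrelevant to the final store.) I would take the decoder to be the identity, so \((P_1)\) holds with \(\mathcal{D}_{el} = \mathrm{id}\); and since every pure-tell computation of either language terminates successfully, \((P_3)\) reduces to the trivial identity on \(\delta^{+}\).

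It then remains to verify that the diagram commutes, i.e. \(\mathcal{S}(\coder(A)) = \mathcal{S}'(A)\) for every \(A \in \llgBDD{tell}\), which I would prove by structural induction on \(A\). The base case carries the content: from the empty store \(\mathcal{S}'(tell(L(m)))\) is exactly the set of pairs \((\{t_1(m_1), \ldots, t_p(m_p)\}, \delta^{+})\) with \(m_1 + \cdots + m_p = m\), while \(\coder(tell(L(m)))\) commits to one branch of the choice and deposits precisely that multiset, so the two observable sets coincide. The inductive steps for \(\seqc\), \(\parac\), \(\choice\) follow because, with no ask/get/nask present, the observables of a composite tell-only agent are determined compositionally (multiset union of the deposited stores for \(\seqc\) and \(\parac\), union of outcome sets for \(\choice\)), a structure \(\coder\) respects.

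The step I expect to be the genuine obstacle is the degenerate case \(m < p\), where \(\mathcal{D}(L(m)) = \emptyset\): then \(tell(L(m))\) has no enabled transition and contributes the failing observable \((\emptyset, \delta^{-})\), while no pure-tell Dense Bach agent can ever fail — and, correspondingly, the empty choice above is not even a legal agent. The argument therefore hinges on either showing that such ill-distributable tells do not arise in the programs considered (equivalently, restricting attention to \(m \ge p\), so that every branch set is non-empty), or else on arguing that this failure outcome need not be matched. Pinning down this point, and confirming that \((P_3)\) is not jeopardised by it, is where I would concentrate the care of the proof; the rest is routine.
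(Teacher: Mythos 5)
Correct, and essentially the paper's own route: \(\llgBD{tell} \leq \llgBDD{tell}\) by Proposition~\ref{dda-prop-2a}, and the converse by coding \(tell(L(m))\) as a choice over the finitely many tuples of \({\cal D}(L(m))\), each branch telling the corresponding dense tokens, with the identity decoder (the paper composes the tells within a branch with \(\seqc\) rather than \(\parac\), which, as you note, is immaterial in a tell-only fragment). You are in fact more scrupulous than the paper on two points: its one-line proof writes only the composition \(tell(t_1(m_1))\seqc\cdots\seqc tell(t_p(m_p))\), leaving the choice over the finite distribution implicit, and it is entirely silent on the degenerate case \(m<p\), where \({\cal D}(L(m))=\emptyset\) and \(tell(L(m))\) deadlocks --- a case that, as you rightly suspect, must be excluded for the equivalence to hold, since property \(P_3\) would then force \(\coder(tell(L(m)))\) to have a failing computation, which no tell-only Dense Bach agent possesses.
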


\begin{sloppypar}
\begin{proof}
Indeed, thanks to proposition~\ref{dda-prop-2a}, 
\(\llgBD{tell} \leq \llgBDD{tell}\). Furthermore, as to each distribution of tokens from dense set 
 of tokens \( \cal{D}\)\(([t_{1},\cdots, t_{p}](m))\) is associated 
a finite set of tuple of dense tokens \(\lbrace  (t_{1} (m_{1}), \cdots, t_{p}(m_p)) : m_1 + \cdots + m_p = m \rbrace  \),  by coding
any $tell(L(m))$ primitive as \(tell(t_{1} (m_{1}))\seqc \cdots \seqc tell(t_{p}(m_p))\) primitives, 
and by using the identity as decoder, one establishes that
\(\llgBDD{tell} \leq \llgBD{tell}\).
\mbox{ }
\hfill 
\end{proof}
\end{sloppypar}

As a result of the expressiveness hierarchy of \cite{Jacquet-Linden-Darquennes-Foclasa13},
it also comes that both languages
\(\llgBD{ask,tell}\) and \(\llgBD{nask,tell}\) are strictly more
expressive than \(\llgBDD{tell}\) since both have been established
strictly more expressive than \(\llgBD{tell}\).

Let us now compare \( \llgBD{ask,tell} \) with its distributed dense counterpart.

\begin{myproposition} 
\label{dda-prop-4a} 
\( \llgBD{ask,tell}  <  \llgBDD{ask,tell} \)
\end{myproposition}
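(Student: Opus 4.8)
The \(\leq\) half is free: Proposition~\ref{dda-prop-2a} already gives \(\llgBD{ask,tell} \leq \llgBDD{ask,tell}\), since a singleton dense set \([t](m)\) reproduces exactly the Dense Bach primitive \(t(m)\). So the real content of the statement is the strictness, i.e. that there is \emph{no} modular embedding \(\llgBDD{ask,tell} \leq \llgBD{ask,tell}\). The plan is to argue by contradiction: assume such an embedding \((\coder,{\cal D}_c)\) exists and exhibit a family of agents whose observables cannot all be matched by one coder. First I would record the standard consequence of \((P_1)\) and \((P_3)\): since \({\cal D}_c\) is element-wise and \({\cal D}_{el}\) preserves the termination mark, for every agent \(A\) the coded agent \(\coder(A)\) has a successful (resp. failing) computation if and only if \(A\) does. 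This turns the problem into a purely termination-level statement about the coded agents, which is what makes the compositionality constraint \((P_2)\) usable.

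For the witness I would take the single primitive \(P = tell([t,u](3))\), whose only outcomes from the empty store are the two incomparable stores \(\{t,u,u\}\) and \(\{t,t,u\}\); the decisive feature is that an atomic mass-\(3\) split always leaves at least one \(t\) and one \(u\) but can push only \emph{one} of the two counts up to \(2\). I would then read off the DBD semantics of four contexts: \(P \seqc ask([t](2))\) and \(P \seqc ask([u](2))\) each both succeed and fail; \(P \seqc (ask([t](2)) \choice ask([u](2)))\) only succeeds; and \(P \seqc ask([t](2)) \seqc ask([u](2))\) only fails. Applying \((P_2)\) rewrites each coded context as a \(\seqc\)/\(\choice\) combination of the \emph{fixed} pieces \(\hat P = \coder(P)\), \(\hat a = \coder(ask([t](2)))\), \(\hat b = \coder(ask([u](2)))\). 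Combining this with the termination consequence pins down a clean invariant: \(\hat P\) never fails, hence reaches a nonempty set \(\Sigma\) of success stores, and at every \(\sigma \in \Sigma\) \emph{exactly one} of \(\hat a,\hat b\) succeeds while the other is immediately blocked (the ``only succeeds'' context rules out a stepping failure of either branch, and the ``only fails'' context together with the fact that success is upward closed in \(\llgBD{ask,tell}\) rules out both succeeding).

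The hard part will be turning this dichotomy into an outright contradiction, precisely because \(\llgBD{ask,tell}\) shares with \(\llgBDD{ask,tell}\) the monotonicity of the store (neither has \(get\) nor \(nask\), so stores only grow). Consequently the naive idea of accumulating mass by coding \(P \parac P\) fails: in DBD \((P\parac P)\seqc ask([t](2))\seqc ask([u](2))\) already succeeds, so no clash arises. The genuine leverage must come from the \emph{atomicity} of the DBD primitives, namely that \(ask([t,u](3))\) is a one-step test of a disjunction of conjunctions and \(tell([t,u](3))\) a one-step nondeterministic commitment, whereas any \(\llgBD{ask,tell}\) realisation must be assembled from single-token asks and tells whose \(\choice\) commits on the first step and whose \(\parac\) interleaves. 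I would therefore build an interfering context, running a coded \(ask([t,u](3))\) in parallel with a coded \(tell\) that completes a so-far-unsatisfied distribution midway, over a store where the atomic DBD test can fire only after the interfering tell, so that DBD yields success while a compositional assembly can be scheduled to commit to the wrong conjunct first and deadlock, producing a failure mark DBD never exhibits. The crux, and the step I expect to be the main obstacle, is to carry this out \emph{uniformly in the adversarial coder}: I must show that every Dense Bach realisation of the atomic disjunctive test admits such a bad schedule, not merely the obvious one.
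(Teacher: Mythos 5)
Your derivation of the dichotomy is correct as far as it goes, but the proof stops exactly at the point where a contradiction would have to be produced, and you acknowledge this yourself; so the strictness half of the proposition is not established. Worse, this gap cannot be closed with your chosen witness. The fragment you work in --- agents built with $\seqc$, $\parac$, $\choice$ from $tell([t,u](3))$, $ask([t](2))$ and $ask([u](2))$ --- \emph{does} embed modularly into $\llgBD{ask,tell}$: code $tell([t,u](3))$ as $(tell(t(2)) \seqc tell(u(1))) \choice (tell(t(1)) \seqc tell(u(2)))$, code $ask([t](2))$ as $ask(t(2))$ and $ask([u](2))$ as $ask(u(2))$, and use the identity decoder. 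Because tells never block, a deadlocked configuration of a coded agent never contains a half-executed tell sequence, and because neither sublanguage can test absence or consume tokens, stores only grow and enabledness of asks is monotone; consequently every deadlocked (resp.\ successful) computation of a coded agent corresponds, with the same final store, to a deadlocked (resp.\ successful) computation of the original one in which each occurrence of $tell([t,u](3))$ fires atomically with the distribution chosen by the branch taken, and conversely. In particular your invariant is consistent --- the $\hat P$, $\hat a$, $\hat b$ just given realize it --- so no interference or scheduling argument, however uniform in the adversarial coder, can refute it. In essence you have rediscovered Proposition~\ref{dda-prop-3a} (the tell fragments are equivalent) extended by the identity on singleton asks: the distributed \emph{tell} is not where the additional expressiveness lies.

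The idea you are missing is that the separation comes from the distributed \emph{ask}, i.e.\ from the ability to test atomically the joint presence of two distinct tokens, and this is what the paper's proof exploits. It takes $AB = ask([a,b](2))$ (so $\opsem(AB) = \{(\emptyset,\delta^{-})\}$), writes $\coder(AB)$ in normal form as a sum of alternatives guarded by $tell(\dense{t_i})$ or $ask(\dense{u_j})$, and eliminates both kinds: a tell-guarded alternative would give $\coder(AB + tell([a](1)))$ a failing computation, although $AB + tell([a](1))$ only succeeds; for ask-guarded alternatives, it shows that no guard token $u_j$ can occur in the stores $\{a_1,\ldots,a_m\}$ and $\{b_1,\ldots,b_n\}$ produced by successful computations of $\coder(tell([a](1)))$ and $\coder(tell([b](1)))$ (comparing $tell([a](1))^{d} \seqc AB$ with $tell([a](1))^{d} \seqc (AB + ask([a](1)))$, where $d$ is the density of $\dense{u_j}$), whence $\coder(tell([a](1)) \seqc tell([b](1)) \seqc AB)$ has a deadlocking computation although the original agent only succeeds. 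Hence $\coder(AB)$ has no alternative at all, contradicting Definition~\ref{jmj-def-lg}. If you want to salvage your write-up, keep your $(P_1)$/$(P_3)$ preliminaries and the normal-form device, but make the witness an atomic two-token ask, not a tell.
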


\begin{sloppypar}
\begin{proof}
On the one hand, 
\( \llgBD{ask,tell}  \leq  \llgBDD{ask,tell} \), 
by proposition~\ref{dda-prop-2a}. On the other hand,
\(\llgBDD{ask,tell} \not\leq \llgBD{ask,tell}\) may be established by contradiction.
The proof proceeds by exploiting the inability of $\llgBD{ask,tell}$ to
atomically test the presence of two distinct tokens $a$ and $b$.
Assume thus the existence of a coder 
\( {\cal C}: \llgBDD{ask,tell} \rightarrow \llgBD{ask,tell}
\) and
consider \( AB = ask([a,b](2))\). Let us prove that its coder is empty, which is absurd since,
by definition~\ref{jmj-lg-def}, it should contain at least one primitive. 
To that end,
one may assume that $\coder(AB)$ is in normal form
\cite{Br-Ja-Express98-ii} and thus is written as \(
tell(\dense{t_{1}});A_{1} + \cdots + tell(\dense{t_{p}});A_{p} +
ask(\dense{u_{1}});B_{1} + \cdots + ask(\dense{u_{q}});B_{q} \),
where $\dense{t_i}$ and $\dense{u_j}$ denote the token $t_i$ and $u_j$
associated to a density. In this expression, we will establish that there is no alternative guarded by a \(tell(\dense{t_{i}})\) operation, and no
alternative guarded by a \( ask(\dense{u_{j}}) \) operation either, in which case \(\coder(AB)\) is empty.

Let us first establish by contradiction that there is no alternative guarded by a \(tell(\dense{t_{i}})\) operation.
Assume there is one, say guarded by \(tell(\dense{t_{i}})\). Then
\(D = \langle \coder(AB)\vert\emptyset\rangle\rightarrow\langle A_{i}\vert { \dense{t_{i}}}\rangle \)
is a valid computation prefix of \(\coder(AB) \). It should deadlocks afterwards since
\( \opsem(AB)={ (\emptyset ,\delta^{-}) }\). However \( D \) is also a valid computation prefix of
\(\coder(AB + tell([a](1))) \). Hence, \(\coder(AB + tell([a](1))) \) admits a failing computation which
contradicts the fact that \(\opsem(AB  + tell([a](1))) = {(\{ a\},\delta^{+})} \).

Secondly we establish that there is also no alternative guarded by an
\( ask(\dense{u_{j}}) \) operation. To that end, let us first consider
two auxiliary computations: as \( \opsem( tell([a](1)) ) = { (\{ a \},
  \delta^{+}) } \), any computation of \( \coder( tell([a](1)) ) \)
starting in the empty store succeeds.  Let \( \langle
\coder(tell([a](1)))\vert\emptyset\rangle \rightarrow \dots
\rightarrow \langle E\vert \{ a_{1},\dots,a_{m}\}\rangle \) be such a
computation.  Similarly, let \( \langle
\coder(tell([b](1)))\vert\emptyset\rangle\rightarrow \dots
\rightarrow\langle E\vert \{ b_{1},\dots ,b_{n}\}\rangle \) be one
computation of \( \coder(tell([b](1)))\).  The proof of the claim
proceeds in two steps.  First let us prove that none of the $u_{i}$'s
belong to \( \{ a_{1},\dots ,a_{m} \} \).  By contradiction, assume
that $u_i = a_k$ for some $k$ and that $d$ is the density associated
to $u_i$, namely, $\dense{u_i} = u_i(d)$. Let us observe that, since
it is in $\llgBD{ask,tell}$, the considered computation of \(
\coder(tell([a](1))) \) can be repeated sequentially, as many times as
needed. As a result, by using $A^d$ to denote the sequential
composition of $d$ instances of $A$, the sequence \( D' = \langle
\coder(tell([a](1))^{d};AB)\vert \emptyset\rangle \rightarrow \dots
\rightarrow \langle AB\vert \{ a_{1}^d,\dots ,a_{m}^d \} \rangle
\rightarrow \langle B_{j}\vert \{ a_{1}^d,\dots ,a_{m}^d \}\rangle \)
is a valid computation prefix of \(\coder(tell([a](1))^{d};AB)\),
which can only be continued by failing suffixes. However \(D' \)
induces the following computation prefix \( D''\) for \(
tell([a](1))^{d};(AB + ask([a](1))) \) which admits only successful
computations: \(D'' = \langle \coder(tell([a](1))^d;(AB +
ask([a](1))))\vert \emptyset\rangle\rightarrow \dots \rightarrow
\langle AB + ask([a](1))\vert \{a_{1}^d,\dots
,a_{m}^d\}\rangle\rightarrow\langle B_{j}\vert \{a_{1}^d, \dots
,a_{m}^d\}\rangle\).  The proof proceeds similarly in the case \(
u_{j} \in { b_{1}, \dots ,b_{n}}\) for some \( j \in { 1, \dots ,q }\)
by then considering \( tell([b](1))^{d};AB \) and \(
tell([b](1))^{d};(AB + ask([b](1)))\).

Finally, the fact that the \( u_{i}'s \) do not belong to \( \{ a_{1}, \dots ,a_{m} \} \cup \{ b_{1}, \dots ,b_{n}\} \) induces a contradiction. Indeed, if this is the case then
\( \langle\coder(tell([a](1));tell([b](1));AB)\vert\emptyset\rangle \rightarrow \dots \rightarrow \langle tell([b](1));AB)\vert \{ a_{1}, \dots ,a_{m} \} \rangle \rightarrow \dots \rightarrow \langle AB \vert \{ a_{1}, \dots ,a_{m},b_{1}, \dots ,b_{n} \} \rangle \not\rightarrow \) is a valid failing computation prefix of
$\coder(tell([a](1));tell([b](1));AB)$ whereas \( tell([a](1));tell([b](1));AB \) has only one successful computation. As a conclusion, $\coder(AB)$ is equivalent to an empty statement, which is absurd by definition~\ref{jmj-def-lg}.
\end{proof}
\end{sloppypar}

Symmetrically, \( \llgBD{nask,tell} \) is strictly less expressive than \( \llgBDD{nask,tell} \).

\begin{myproposition} 
\label{dda-prop-5a} 
\( \llgBD{nask,tell} < \llgBDD{nask,tell} \).
\end{myproposition}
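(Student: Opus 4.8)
The plan is to prove the two inequalities separately. The direction \(\llgBD{nask,tell} \leq \llgBDD{nask,tell}\) is immediate from Proposition~\ref{dda-prop-2a}, so the substance is to establish \(\llgBDD{nask,tell} \not\leq \llgBD{nask,tell}\) by contradiction, following the layout of Proposition~\ref{dda-prop-4a}. Writing \((i,j)\) for a store holding \(i\) copies of \(a\) and \(j\) of \(b\), I take as distinguishing agent \(NAB = nask([a,b](2))\); read through the \(SMax\) characterisation, it succeeds exactly when the total number of occurrences of \(a\) and \(b\) is strictly below \(2\). Thus \(\opsem(NAB) = \{(\emptyset,\delta^{+})\}\), and \(NAB\) still succeeds after one \(tell([a](1))\) or one \(tell([b](1))\), but fails on \((1,1)\), on \((2,0)\) and on \((0,2)\). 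The decisive feature is that this passing region \(\{(0,0),(0,1),(1,0)\}\) genuinely couples the two counters: it is \emph{not} a union of single--token thresholds \(\{i<D\}\cup\{j<E\}\) (any such union large enough to contain \((1,0)\) already contains \((2,0)\)), whereas every atomic test available in \(\llgBD{nask,tell}\) inspects one token only.

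Assuming a coder \({\cal C}\), I would put \({\cal C}(NAB)\) in normal form \(tell(\dense{t_{1}});A_{1} + \cdots + nask(\dense{u_{1}});B_{1} + \cdots\) and be tempted to dualise Proposition~\ref{dda-prop-4a} verbatim; but this is exactly where the main obstacle lies. In the \(ask\) case an alternative guarded by \(ask(\dense{u_{j}})\) on an \emph{absent} token is \emph{blocked}, so after telling both \(a\) and \(b\) no guard can fire and the coder deadlocks, contradicting a success of the source agent. For \(nask\) the situation is reversed: a guard \(nask(\dense{u_{j}})\) on an absent token \emph{fires}, and its continuation \(B_{j}\) is then free to adapt to the enlarged store and fail as required. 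Consequently the purely sequential/choice argument collapses and no alternative can be excluded in isolation.

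The remedy is to exploit parallel composition together with the fact that \(\llgBD{nask,tell}\) has no \(get\), so along any computation the store only grows. I would use the context \(NAB \parac tell([a](1))\). Because \(NAB\) is executed atomically and always sees the store on the passing axis \((\#b=0,\ \#a\le 1)\), its only observable is \((\{a\},\delta^{+})\); hence, by \(P_{3}\), no computation of \({\cal C}(NAB \parac tell([a](1))) = {\cal C}(NAB)\parac{\cal C}(tell([a](1)))\) may terminate in failure. On the other hand, to fail on \((2,0)\) and on \((1,1)\) while passing \((1,0)\) and \((0,1)\), every realisation of \(NAB\) must somewhere perform two \emph{dependent} nask--checks — a conjunction such as ``\(i<1\)'' together with ``\(j<2\)'' — that cannot be folded into a single atomic step. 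Interleaving the concurrent \(tell([a](1))\) so that it fires after the coder has committed to such a conjunctive branch (via its ``\(j<2\)'' test) but before its ``\(i<1\)'' test turns that test into a failure, producing a \(\delta^{-}\) computation: the desired contradiction.

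The hard part is to make ``every realisation must perform a splittable conjunction'' precise for an \emph{arbitrary} normal--form coder, rather than for an obvious candidate such as \((nask(a(1))\parac nask(b(2))) + (nask(a(2))\parac nask(b(1)))\). I would support it with (i) a monotonicity lemma — a \(nask\) that succeeds at a store may fail once tokens are added, and added tokens are never retracted — and (ii) the remark that the region ``total \(<2\)'' is a union of genuinely two--dimensional boxes and not of coordinate half--spaces, so no single--step choice of single--token \(nask\)'s realises it and at least one branch must commit before its second coordinate has been tested. Finally I would flag a semantic subtlety: rule \((N_{dbd})\) phrases \(nask\) through emptiness of \({\cal D}(L(m))\sqcap\sigma\), under whose \emph{literal} reading \(nask([a,b](2))\) degenerates to ``not both present'' and \emph{is} embeddable by \(nask(a(1)) + nask(b(1))\); the argument above uses the \(SMax\) reading intended by the paper, and transfers unchanged to the literal reading by replacing \(NAB\) with \(nask([a,b](3))\), whose passing region contains \((1,1)\) but neither \((1,2)\) nor \((2,1)\).
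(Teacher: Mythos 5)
Your easy direction and your choice of witness agree with the paper: \( \llgBD{nask,tell} \leq \llgBDD{nask,tell} \) is quoted from Proposition~\ref{dda-prop-2a}, and the paper's own (very terse) proof of the converse likewise hinges on the agent \( nask([a,b](2)) \) and on the inability of \(\llgBD{nask,tell}\) to test atomically the absence of two distinct tokens. Two of your side observations are also sound and go beyond what the paper says: the literal reading of rule \((N_{dbd})\) (with all \(m_i\geq 1\)) and the \(SMax\) reading genuinely differ, so the paper's Theorem~1 only holds under the latter; and the paper's instruction to ``proceed as in proposition~\ref{dda-prop-4a}'' cannot be followed verbatim, because a \(nask(\dense{u_j})\)-guarded alternative \emph{fires} exactly on the stores where the ask-case argument relied on \emph{blocking}. (You overstate this, though: tell-guarded alternatives can still be excluded as in Proposition~\ref{dda-prop-4a}, using the failing context \(tell([a](1))\seqc tell([b](1))\seqc NAB\) together with the choice context \(tell([a](1))\seqc tell([b](1))\seqc(NAB + tell([c](1)))\). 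It is only the treatment of nask-guarded alternatives that does not dualise.)

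The genuine gap is the pivot of your replacement argument: the claim that ``every realisation of \(NAB\) must somewhere perform two dependent nask-checks that cannot be folded into a single atomic step.'' This is not merely left imprecise, as you admit; it is false as stated, because it silently assumes the coder tests (images of) \(a\) and \(b\) \emph{separately}. Nothing forces this: a coder may maintain a \emph{joint counter}. Take \( \coder(tell([a](1))) = tell(x(1))\seqc tell(a'(1)) \), \( \coder(tell([b](1))) = tell(x(1))\seqc tell(b'(1)) \), \( \coder(NAB) = nask(x(2)) \), \( \coder(tell([c](1))) = tell(c'(1)) \), with the decoder reading off the primed marker tokens. In coded coordinates your region ``\(\#a+\#b<2\)'' becomes the single half-space ``\(\#x<2\)'', tested atomically by one Dense Bach primitive, so your geometric remark (ii) --- phrased in \emph{source} coordinates --- simply does not apply to it. This coder satisfies every constraint your contexts generate: all computations of \( \coder(NAB)\parac\coder(tell([a](1))) \) succeed (at most one \(x\) is ever on the store, so \(nask(x(2))\) is never disabled), \( \coder(tell([a](1))\seqc tell([b](1))\seqc NAB) \) always deadlocks with the correct final store, and the monotonicity lemma (i) holds of it vacuously. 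Hence the interleaving you describe never produces the failing computation you need, and the contradiction never materialises. Closing the gap requires constraints that defeat such counting coders, and these necessarily involve material outside your proposal (and outside the paper's one-line appeal to Proposition~\ref{dda-prop-4a}): for instance contexts over further tokens, exploiting that the fixed finite agent \( \coder(tell([a](1))) \) cannot maintain a joint counter for each of the infinitely many pairs \([a,u]\) that \(\llgBDD{nask,tell}\) can interrogate via \(nask([a,u](2))\).
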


\begin{sloppypar}
\begin{proof}
On the one hand, \( \llgBD{nask,tell} \leq \llgBDD{nask,tell} \) holds
by proposition~\ref{dda-prop-2a}.  On the other hand, \(
\llgBDD{nask,tell} \not\leq \llgBD{nask,tell} \) is proved by
contradiction, assuming the existence of a coder ${\cal C}$.  The
proof proceeds as in proposition \ref{dda-prop-4a} but this time by
exploiting the inability of $\llgBD{nask,tell}$ to atomically test the
absence of two distinct tokens $a$ and $b$. 
\end{proof}
\end{sloppypar}


\(\llgBDD{nask,tell}\) and \(\llgBD{ask,tell}\) are not comparable with
each other, as well as \(\llgBDD{ask,tell}\) with regards to
\(\llgBD{nask,tell}\).

\begin{myproposition} 
\label{dda-prop-6a}
\[ \begin{array}{lll}
\begin{array}{ll}
(i) & \llgBDD{nask,tell} \not\leq  \llgBD{ask,tell} \\
(ii) & \llgBD{ask,tell} \not\leq  \llgBDD{nask,tell} 
\end{array}
& \hspace{10mm} &
\begin{array}{ll}
(iii) & \llgBDD{ask,tell} \not\leq  \llgBD{nask,tell} \\
(iv) & \llgBD{nask,tell} \not\leq  \llgBDD{ask,tell} 
\end{array}
\end{array}
\]
\end{myproposition}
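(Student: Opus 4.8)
The plan is to read the four statements as instances of a single phenomenon: a fragment whose only test is $ask$ (a \emph{positive}, presence test) cannot be modularly embedded into a fragment whose only test is $nask$ (a \emph{negative}, absence test), and conversely, and the passage to distributed density does not close this gap. Parts $(i)$ and $(iii)$ can in fact be dispatched quickly by transitivity: since $\llgBD{nask,tell} \leq \llgBDD{nask,tell}$ and $\llgBD{ask,tell} \leq \llgBDD{ask,tell}$ by Proposition~\ref{dda-prop-2a}, a hypothetical $\llgBDD{nask,tell} \leq \llgBD{ask,tell}$ (resp. $\llgBDD{ask,tell} \leq \llgBD{nask,tell}$) would yield $\llgBD{nask,tell} \leq \llgBD{ask,tell}$ (resp. the symmetric inclusion), contradicting the incomparability of $\llgBD{ask,tell}$ and $\llgBD{nask,tell}$ from the hierarchy of \cite{Jacquet-Linden-Darquennes-Foclasa13}. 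Parts $(ii)$ and $(iv)$ resist this shortcut, because there the \emph{target} is the richer distributed language, so transitivity gives nothing; I would therefore give a direct argument, which in fact also re-proves $(i)$ and $(iii)$ in a self-contained way.

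The tool I would isolate first is a pair of monotonicity lemmas for the one-test fragments. Lemma~A, for the $ask$ fragments $\llgBD{ask,tell}$ and $\llgBDD{ask,tell}$: if $\langle A \mid \sigma \rangle \rightarrow^{*} \langle E \mid \tau \rangle$ then for every store $\sigma' \supseteq \sigma$ there is a successful computation $\langle A \mid \sigma' \rangle \rightarrow^{*} \langle E \mid \tau' \rangle$. Lemma~B, for the $nask$ fragments $\llgBD{nask,tell}$ and $\llgBDD{nask,tell}$: the same conclusion but for every $\sigma' \subseteq \sigma$. Both are proved by replaying a fixed successful computation step by step, keeping the same (possibly non-deterministic, distributed) $tell$ choices: a $tell$ step adds the same tokens irrespective of the ambient store, while an $ask(L(m))$ guard holds exactly when $SMax(\sigma,L(m)) \geq m$ and a $nask(L(m))$ guard exactly when $SMax(\sigma,L(m)) < m$, as recorded by the characterisation of $\mathcal{D}(L(m)) \sqcap \sigma = \emptyset$ proved above. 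The first condition is preserved when the store grows and the second when it shrinks, and the operator rules of Figure~\ref{fig-operators} never inspect the store, so an induction on the length of the computation closes both lemmas.

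The separation scheme is then uniform. For $(ii)$, assume a coder $\coder : \llgBD{ask,tell} \rightarrow \llgBDD{nask,tell}$ and use the witnesses $ask(a(1))$ and $tell(a(1)) \seqc ask(a(1))$, for which $\opsem(ask(a(1))) = \{(\emptyset,\delta^{-})\}$ and $\opsem(tell(a(1)) \seqc ask(a(1))) = \{(\{a\},\delta^{+})\}$. By $P_1$ and $P_3$ every computation of $\coder(ask(a(1)))$ from the empty store fails, while by $P_2$ the composition $\coder(tell(a(1))) \seqc \coder(ask(a(1)))$ has only successful computations; selecting one, $\coder(tell(a(1)))$ reaches some $\langle E \mid \rho \rangle$ and then $\coder(ask(a(1)))$ succeeds from $\rho \supseteq \emptyset$. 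Lemma~B pulls this success back to the empty store, contradicting that all computations of $\coder(ask(a(1)))$ from $\emptyset$ fail. Part $(iii)$ is identical with the distributed source witnesses $ask([a](1))$ and $tell([a](1)) \seqc ask([a](1))$ and the plain target $\llgBD{nask,tell}$. Parts $(i)$ and $(iv)$ are the exact duals: there the target is an $ask$ fragment, and I would instead take the witnesses $nask(\cdot)$, which succeeds on the empty store, and $tell(\cdot) \seqc nask(\cdot)$, which can only fail, deriving the contradiction from Lemma~A.

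I expect the main obstacle to be the clean statement and proof of the monotonicity lemmas in the presence of the non-deterministic distributed $tell$: the replay must commit to the same distribution $(t_1(m_1),\dots,t_p(m_p))$ on the shifted store, and one must verify that the $SMax$ characterisation of the $ask$ and $nask$ guards is genuinely monotone, respectively anti-monotone, under multiset inclusion. A secondary point to handle with care is that the operational semantics observes computations only from the empty store; presence or absence at a non-empty store $\rho$ is therefore reachable only by prefixing a $tell$ and appealing to compositionality $P_2$, which is exactly the purpose of the sequential witnesses $tell(\cdot) \seqc (\,\cdot\,)$ above.
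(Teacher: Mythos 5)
Your proposal is correct and takes essentially the same route as the paper: parts (i) and (iii) are dispatched by transitivity through Proposition~\ref{dda-prop-2a} and the known incomparability of $\llgBD{ask,tell}$ and $\llgBD{nask,tell}$, and parts (ii) and (iv) by contradiction on the witnesses $tell(t(1)) \seqc ask(t(1))$ and $tell(t(1)) \seqc nask(t(1))$, exploiting that success of nask/tell agents survives shrinking the store while success of ask/tell agents survives enlarging it. The only difference is presentational: the paper invokes these replay properties inline (``as enriching the initial content of the store leads to the same result\dots''), whereas you isolate them as explicit monotonicity lemmas proved by induction on the computation, which is a slightly more careful packaging of the same argument.
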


\begin{proof}
\textbf{(i)} Otherwise we have \(\llgBD{nask,tell} \leq \llgBD{ask,tell}\)
which has been proved impossible in \cite{Jacquet-Linden-Darquennes-Foclasa13}. 
\textbf{(ii)} By contradiction, consider \( A = tell(t(1)) \seqc ask(t(1))) \).
One has \(\opsem(A) = \{ (\{ t(1) \},\delta^{+}) \} \). Hence, by \textit{$P_{3}$}, 
\( \coder(A) \) succeeds whereas we shall establish that it has failing
computations. Indeed, since 
\( \opsem(ask(t(1))) = \{ (\emptyset,\delta^{-}) \} \), 
any computation of \( \coder(ask(t(1))) \) starting on
the empty store fails. As \( \coder(ask(t(1))) \) is composed of nask and
tell primitives, this can only occur by having a nask primitive
preceded by a tell primitive. As enriching the initial content of the
store leads to the same result, any computation starting on any
(arbitrary) store fails. As a consequence, even if \( \coder(tell(t(1))) \)
has a successful computation, this computation cannot be continued by
a successful computation of \( \coder(ask(t(1))) \). Consequently any
computation of \( \coder(tell(t(1));ask(t(1))) \) fails, which produces a 
contradiction. 
\textbf{(iii)} Otherwise we would have \(\llgBD{ask,tell}
\leq \llgBD{nask,tell}\) which has been proved impossible in
\cite{Jacquet-Linden-Darquennes-Foclasa13}. 
\textbf{(iv)} By contradiction, consider 
\( A = tell(t(1)) \seqc nask(t(1))) \). One has
\( \opsem(A) = \{ (\{ t \},\delta^{-}) \} \). 
By \textit{$P_{3}$}, \( \coder(A) \) fails, whereas we shall establish that
it has a successful computation. Indeed, since 
\( \opsem(tell(t(1))) = \{ (\{ t(1) \},\delta^{+}) \} \), 
any computation of \( \coder(tell(t(1))) \)
starting on the empty store is successful. Similarly, it follows from
\( \opsem(nask(t(1))) = \{ (\emptyset,\delta^{+}) \} \) that any
computation of \( \coder(nask(t(1))) \) starting on the empty store is
successful, and, consequently, is any computation starting from any
store, since \( \coder(nask(t(1))) \) is composed of ask and tell
primitives. Summing up, any (successful) computation of \(
\coder(tell(t)) \) starting on the empty store can be continued by a
(successful) computation of \( \coder(nask(t)) \), which leads to the
contradiction.\hfill 
\end{proof}

\(\llgBDD{nask,tell}\) and \(\llgBDD{ask,tell}\) are not comparable with each other, as well as \(\llgBDD{nask,tell}\) with regards to \(\llgBD{ask,nask,tell}\).

\begin{myproposition} 
\label{dda-prop-7a}
\[ \begin{array}{lll}
\begin{array}{ll}
(i) & \llgBDD{nask,tell} \not\leq  \llgBDD{ask,tell} \\
(ii) & \llgBDD{ask,tell} \not\leq  \llgBDD{nask,tell} 
\end{array}
& \hspace{5mm} &
\begin{array}{ll}
(iii) & \llgBD{ask,nask,tell} \not\leq  \llgBDD{nask,tell} \\
(iv) & \llgBDD{nask,tell}  \not\leq  \llgBD{ask,nask,tell} 
\end{array}
\end{array}
\]
\end{myproposition}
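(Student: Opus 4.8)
The plan is to treat the four statements separately, dispatching (i)--(iii) by reduction to earlier results and reserving the real work for (iv).

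For (i) I would argue by contradiction: if $\llgBDD{nask,tell} \leq \llgBDD{ask,tell}$ held, then composing with the embedding $\llgBD{nask,tell} \leq \llgBDD{nask,tell}$ provided by proposition~\ref{dda-prop-2a} and using that $\leq$ is a pre-order would yield $\llgBD{nask,tell} \leq \llgBDD{ask,tell}$, contradicting part (iv) of proposition~\ref{dda-prop-6a}. Symmetrically, for (ii) an embedding $\llgBDD{ask,tell} \leq \llgBDD{nask,tell}$ together with $\llgBD{ask,tell} \leq \llgBDD{ask,tell}$ (proposition~\ref{dda-prop-2a}) would compose to $\llgBD{ask,tell} \leq \llgBDD{nask,tell}$, contradicting part (ii) of proposition~\ref{dda-prop-6a}. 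For (iii) I would first use language inclusion to get $\llgBD{ask,tell} \leq \llgBD{ask,nask,tell}$; then an embedding $\llgBD{ask,nask,tell} \leq \llgBDD{nask,tell}$ would compose to $\llgBD{ask,tell} \leq \llgBDD{nask,tell}$, again contradicting part (ii) of proposition~\ref{dda-prop-6a}. Each of these three is thus a one-line consequence of transitivity of $\leq$, proposition~\ref{dda-prop-2a}, language inclusion and proposition~\ref{dda-prop-6a}.

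Statement (iv), $\llgBDD{nask,tell} \not\leq \llgBD{ask,nask,tell}$, is the substantive one: it cannot be obtained by reduction, since $\llgBD{ask,nask,tell}$ contains every Dense Bach fragment and hence embeds all of them by language inclusion, so any separation must genuinely exploit the distributed density. I would proceed by contradiction in the style of propositions~\ref{dda-prop-4a} and~\ref{dda-prop-5a}, assuming a coder $\coder : \llgBDD{nask,tell} \rightarrow \llgBD{ask,nask,tell}$ and exploiting the inability of $\llgBD{ask,nask,tell}$ to test atomically a joint condition on two distinct tokens $a$ and $b$. The natural probe is a distributed negative ask such as $N = nask([a,b](3))$, whose distribution makes it succeed exactly when $\#a + \#b < 3$, i.e. a genuine threshold on the combined count. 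Writing $\coder(N)$ in normal form \cite{Br-Ja-Express98-ii} as a choice of alternatives guarded by $tell$, $ask$ and $nask$ primitives, I would use probe agents of the form $tell([a](1))^{i} \seqc tell([b](1))^{j} \seqc N$ together with $P_3$ to pin down the required termination pattern of $\coder(N)$ (success for $i+j \le 2$, failure for $i+j \ge 3$). Since $\llgBD{ask,nask,tell}$ has no $get$, stores only grow along computations, so $ask$-success and $nask$-failure are both upward monotone; I would use this monotonicity, together with choice-augmented probes in the spirit of the construction in proposition~\ref{dda-prop-4a}, to rule out the guard types of $\coder(N)$ and reach the absurd conclusion that $\coder(N)$ is empty, contradicting definition~\ref{jmj-def-lg}.

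The hard part will be that, unlike the settings of propositions~\ref{dda-prop-4a} and~\ref{dda-prop-5a}, the target now offers both $ask$ and $nask$, so the atomic test $N$ can be mimicked in isolation (a disjunctive test as $nask(\cdot) \choice nask(\cdot)$, or a threshold test via a finite choice of $ask$/$nask$ boxes); the contradiction must therefore come from the surrounding context, not from $N$ alone. The delicate point is that any coding of the atomic distributed primitive is necessarily multi-step, so I expect to force the contradiction by embedding the coded test in a context that observes an intermediate state — most promisingly a parallel tell, since by $P_2$ one has $\coder(N \parac B) = \coder(N) \parac \coder(B)$, allowing a coded $tell$ to interleave into the middle of the multi-step simulation and push a token count across the threshold to which $N$ reacts. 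The real obstacle is to make this uniform over \emph{all} coders: a coder may conflate the independently told counts of $a$ and $b$ into a shared auxiliary counter so as to render the threshold test single-step and immune to interleaving, and ruling out such codings is exactly why a single primitive does not suffice. I would therefore expect to need a context involving several interacting tests (or a third token) so that no finite compositional coding of the independent tells can simultaneously feed all the counters required, while controlling the non-determinism of the distributed $tell$ and the possible store-dependence of coded tells by always reasoning along a single fixed successful computation of each coded tell and pumping it, as in the $A^{d}$ device of proposition~\ref{dda-prop-4a}.
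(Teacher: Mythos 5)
Your treatment of parts (i)--(iii) is correct and coincides with the paper's own proof: the paper dispatches (i) as ``otherwise \(\llgBD{nask,tell} \leq \llgBDD{ask,tell}\), contradicting proposition~\ref{dda-prop-6a}(iv)'' and both (ii) and (iii) as ``otherwise \(\llgBD{ask,tell} \leq \llgBDD{nask,tell}\), contradicting proposition~\ref{dda-prop-6a}(ii)'', which are exactly the compositions with proposition~\ref{dda-prop-2a} and language inclusion that you spell out. The problem is part (iv).

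First, a concrete error that invalidates your probing step: \(nask([a,b](3))\) is \emph{not} a threshold test on \(\#a+\#b\). Since the components of a distribution must be positive integers, \({\cal D}([a,b](3)) = \{(a(1),b(2)),\,(a(2),b(1))\}\), so \(nask([a,b](3))\) succeeds, for instance, on the stores \(\{a,a,a\}\) and \(\{a,b\}\). Your claimed termination pattern for the probes \(tell([a](1))^{i} \seqc tell([b](1))^{j} \seqc N\) (``success for \(i+j\le 2\), failure for \(i+j\ge 3\)'') is therefore false --- take \(i=3\), \(j=0\) --- and every use of \(P_3\) built on that pattern is unsound. (Your instinct to take density \(3\) rather than \(2\) is actually defensible: the success set of \(nask([a,b](3))\) contains the isolated point \(\#a=\#b=1\) and so cannot be matched by a single choice of one-token ask/nask guards, whereas \(nask([a,b](2))\) is mimicked in one step by \(nask(a(1)) \choice nask(b(1))\); but to exploit this you must state its semantics correctly.)

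Second, and more fundamentally, your argument for (iv) is never closed. Your final paragraph identifies genuine obstacles (multi-step codings, shared auxiliary counters, interleaving) and then only conjectures that ``a context involving several interacting tests (or a third token)'' should force the contradiction; no such context is exhibited, and no case analysis ruling out the tell-, ask- and nask-guarded alternatives of the normal form of \(\coder(N)\) is carried out. That is a research plan, not a proof. The paper instead closes (iv) by reduction to an argument it already has: it re-runs the proof of proposition~\ref{dda-prop-5a}(ii) --- itself the normal-form contradiction of proposition~\ref{dda-prop-4a}, with the probe exploiting the impossibility of atomically testing the \emph{absence} of two distinct tokens --- and adds the one observation needed for the larger target \(\llgBD{ask,nask,tell}\): the ask primitive does not destroy elements and hence does not modify the store, so the monotonicity on which that argument rests is unaffected. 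Your proposal gestures at the same monotonicity (``stores only grow'') but the decisive work --- an explicit probe with correct semantics, explicit contexts, and the elimination of each type of guarded alternative --- is missing.
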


\begin{proof}
\textbf{(i)} Otherwise \(\llgBD{nask,tell} \leq  \llgBDD{ask,tell}\), which contradicts proposition 6(iv). 
\textbf{(ii)} and \textbf{(iii)} Otherwise \(\llgBD{ask,tell} \leq  \llgBDD{nask,tell}\), 
which contradicts proposition 6(ii). 
\textbf{(iv)} The proof proceeds as in proposition 5(ii). The presence of the ask primitive 
in \( \LmultfBD \) 
 does not modify the reasoning, as it does not destroy elements and so does
 not modify the state of the store $\sigma$. \hfill 
\end{proof}

Symmetrically, \(\llgBD{get,tell}\) and \(\llgBDD{ask,tell}\) are not comparable with each other, as
 \(\llgBD{get,tell}\) and \(\llgBDD{nask,tell}\) are not comparable with each other.

\begin{myproposition} 
\label{dda-prop-8a}
\mbox{}

\noindent
\[ \begin{array}{lll}
\begin{array}{ll}
(i) & \llgBD{get,tell} \not\leq  \llgBDD{ask,tell} \\
(ii) & \llgBDD{ask,tell} \not\leq  \llgBD{get,tell} 
\end{array}
& \hspace{5mm} &
\begin{array}{ll}
(iii) & \llgBD{get,tell} \not\leq  \llgBDD{nask,tell} \\
(iv) & \llgBDD{nask,tell}  \not\leq  \llgBD{get,tell} 
\end{array}
\end{array}
\]
\end{myproposition}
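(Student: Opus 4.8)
The plan is to split the four incomparabilities according to the direction of the non-embedding. Parts~(ii) and~(iv), stating that the distributed languages do not embed into \(\llgBD{get,tell}\), I would obtain by reduction to the Dense Bach hierarchy of \cite{Jacquet-Linden-Darquennes-Foclasa13}, exactly in the style of Proposition~\ref{dda-prop-6a}(i) and~(iii). Parts~(i) and~(iii), stating that \(\llgBD{get,tell}\) does not embed into a distributed language, cannot be reduced this way, since Proposition~\ref{dda-prop-2a} points in the opposite direction, so I would prove them directly, exploiting the consuming character of \(get\). The common engine is a pair of monotonicity facts: along any computation of \(\llgBD{get,tell}\) or \(\llgBDD{ask,tell}\) the store never shrinks, so a successful computation stays successful when its initial store is enlarged (upward monotonicity), whereas in \(\llgBDD{nask,tell}\) a computation that fails on the empty store fails on every larger store as well, which is precisely the remark already used in Proposition~\ref{dda-prop-6a}(ii).

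For~(ii) I would assume \(\llgBDD{ask,tell}\leq\llgBD{get,tell}\); composing with \(\llgBD{ask,tell}\leq\llgBDD{ask,tell}\) from Proposition~\ref{dda-prop-2a} and using that \(\leq\) is a pre-order gives \(\llgBD{ask,tell}\leq\llgBD{get,tell}\), which the hierarchy of \cite{Jacquet-Linden-Darquennes-Foclasa13} rules out. Part~(iv) is symmetric: \(\llgBDD{nask,tell}\leq\llgBD{get,tell}\) would yield the equally impossible \(\llgBD{nask,tell}\leq\llgBD{get,tell}\).

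Part~(iii) I would settle by transcribing the proof of Proposition~\ref{dda-prop-6a}(ii), replacing the witness \(ask(t(1))\) by \(get(t(1))\). Assuming a coder \(\coder:\llgBD{get,tell}\rightarrow\llgBDD{nask,tell}\), consider \(A=tell(c(1))\seqc get(c(1))\), which only succeeds, so that \(\coder(A)\) must only succeed. Since \(\opsem(get(c(1)))=\{(\emptyset,\delta^{-})\}\), every computation of \(\coder(get(c(1)))\) from the empty store fails, hence, by the downward monotonicity of \(\llgBDD{nask,tell}\), from every store; appending such a computation to a successful computation of \(\coder(tell(c(1)))\) produces a failing computation of \(\coder(A)\), a contradiction. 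For the genuinely new part~(i) I would separate two \(\llgBD{get,tell}\) agents that differ only by consumption, namely \(A_{1}=tell(t(1))\seqc get(t(1))\), with \(\opsem(A_{1})=\{(\emptyset,\delta^{+})\}\), and \(A_{2}=tell(t(1))\seqc(get(t(1))\parac get(t(1)))\), with \(\opsem(A_{2})=\{(\emptyset,\delta^{-})\}\) because the single occurrence of \(t\) can feed only one of the two parallel \(get\)'s while the other deadlocks. As \(A_{1}\) only succeeds, \(\coder(A_{1})=\coder(tell(t(1)))\seqc\coder(get(t(1)))\) only succeeds, so there is a store \(\tau\) reached by a successful computation of \(\coder(tell(t(1)))\) from which every computation of \(\coder(get(t(1)))\) succeeds. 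In \(\coder(A_{2})\) I would run \(\coder(tell(t(1)))\) to \(\tau\), then run the first copy of \(\coder(get(t(1)))\) to completion, reaching some \(\tau'\supseteq\tau\) since stores only grow, and finally run the second copy from \(\tau'\); upward monotonicity of \(\llgBDD{ask,tell}\) makes it succeed from \(\tau'\) just as it did from \(\tau\). This exhibits a successful computation of \(\coder(A_{2})\), contradicting \(\opsem(A_{2})=\{(\emptyset,\delta^{-})\}\).

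The hard part is clearly~(i): one must establish the upward-monotonicity (store-extension) lemma for \(\llgBDD{ask,tell}\) — enlarging the initial store can only keep asks enabled and leaves told tokens untouched — and then justify the interleaving that linearises the two parallel copies of \(\coder(get(t(1)))\), checking in particular that a single \(\tau\) can be fixed despite the non-determinism of \(\coder(tell(t(1)))\). By contrast, (iii) is a direct copy of Proposition~\ref{dda-prop-6a}(ii), and (ii) and~(iv) are immediate once the separations \(\llgBD{ask,tell}\not\leq\llgBD{get,tell}\) and \(\llgBD{nask,tell}\not\leq\llgBD{get,tell}\) of \cite{Jacquet-Linden-Darquennes-Foclasa13} are invoked; should those two separations not be stated there, they themselves follow from the same consumption argument, a single non-consuming test in the source being reusable while each coded test in \(\llgBD{get,tell}\) competes for finitely many tokens.
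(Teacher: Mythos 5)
Parts (i), (iii) and (iv) of your proposal are sound. Your (iv) is exactly the paper's reduction: compose the assumed embedding with Proposition~\ref{dda-prop-2a} to get $\llgBD{nask,tell}\leq\llgBD{get,tell}$, contradicting proposition 12(iv) of \cite{Jacquet-Linden-Darquennes-Foclasa13}. Your (iii) is a valid direct transcription of the argument of Proposition~\ref{dda-prop-6a}(ii), whereas the paper handles (iii) by reduction; both work. Your (i), with the parallel witness $tell(t(1))\seqc(get(t(1))\parac get(t(1)))$ and the store-extension (upward monotonicity) lemma for the ask/tell target, is a minor variant of the paper's sequential witness $tell(t(1))\seqc get(t(1))\seqc get(t(1))$, with the same replay argument. (A small slip in your preamble: stores certainly do shrink along $\llgBD{get,tell}$ computations, since $get$ consumes tokens; but you only ever use monotonicity of the target language $\llgBDD{ask,tell}$, so nothing breaks.)

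Part (ii), however, has a genuine gap, and it is the heart of the proposition. Your reduction needs the separation $\llgBD{ask,tell}\not\leq\llgBD{get,tell}$, but this is not established in \cite{Jacquet-Linden-Darquennes-Foclasa13} and is in fact false: $ask(t(m))$ can be coded as $get(t(m))\seqc tell(t(m))$, and since a $tell$ is never blocking this coding introduces no new deadlock and leaves final stores unchanged, so $\llgBD{ask,tell}\leq\llgBD{get,tell}$ holds. The present paper itself relies on that embedding elsewhere, for instance in the proof of Proposition~\ref{dda-prop-13a}(i), where $\llgBDD{get,tell}\leq\llgBD{nask,tell}$ is shown to yield $\llgBD{ask,tell}\leq\llgBD{nask,tell}$ (and likewise in the intended reductions for parts (iii) and (iv) of the present proposition, whose proof texts appear interchanged in the paper). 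Your fallback ``consumption argument'' fails for the same reason: the coded test consumes tokens but immediately restores them, so its repeated or competing use produces no observable difference. This is precisely why the paper proves (ii) directly and at length: it takes the genuinely two-token agent $AB=ask([a,b](2))$, writes $\coder(AB)$ in normal form $tell(\dense{t_{1}})\seqc A_{1}+\cdots+tell(\dense{t_{p}})\seqc A_{p}+get(\dense{u_{1}})\seqc B_{1}+\cdots+get(\dense{u_{q}})\seqc B_{q}$, and shows that no alternative can be guarded by a $tell$ (else $\coder(AB+tell([a](1)))$ would have a failing computation) nor by a $get$ (by running the coded computations of $tell([a](1))$ and $tell([b](1))$ in sequence, locating some $\dense{u_{j}}$ in the resulting store, and deriving a failing computation of the coding of $tell([a](1))\seqc(AB+ask([a](1)))$, or symmetrically with $b$); hence $\coder(AB)$ would be empty, which Definition~\ref{jmj-def-lg} forbids. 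The obstruction captured there is the atomicity of the two-token test --- exactly the feature distributed density adds --- and it cannot be reached by a reduction to the single-token Dense Bach hierarchy, because single-token dense asks \emph{are} codable with $get$ and $tell$.
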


\begin{sloppypar}
\begin{proof}
\textbf{(i)} By contradiction, consider \( tell(t(1)) \seqc get(t(1)) \).
One has \( \opsem(tell(t(1)) \seqc get(t(1))) = \{ (\emptyset,\delta^{+}) \}
\). By \textit{$P_{2}$} and \textit{$P_{3}$}, any computation of \(
\opsem( \coder(tell(t(1))) \seqc \coder(get(t(1))) ) \) is thus successful. Since
\( \coder(get(t(1))) \) is composed of ask and tell primitives only and
since ask and tell primitives do not destroy elements, at least one
computation of \( \opsem( \coder(tell(t(1))) \seqc \coder(get(t(1))) \seqc
\coder(get(t(1)))) \) is successful. However, \( \opsem( tell(t(1)) \seqc
get(t(1)) \seqc get(t(1)) ) = \{ (\emptyset, \delta^{-}) \}\), which provides 
the contradiction.

\textbf{(ii)} 
The proof is established by contradiction. Intuitively, $\llgBD{get,tell}$ is unable to atomically test the
presence of $a$ and $b$. Let us thus consider \( AB = ask([a,b](2))\) and prove that its coder has a successful
computation. This leads to a contradiction since $AB$ has just one failing computation.
To that end, one may assume that $\coder(AB)$ is in normal form (see \cite{Br-Ja-Express98-ii}) and thus is written as \(
tell(\dense{t_{1}});A_{1} + \cdots + tell(\dense{t_{p}});A_{p} + get(\dense{u_{1}});B_{1} + \cdots + get(\dense{u_{q}});B_{q} \),
where $\dense{t_i}$ and $\dense{u_j}$ denote the token $t_i$ and $u_j$ associated to a density.

The proof proceeds by establishing that
(I) there is no alternative guarded by a \( tell(\dense{t_{i}}) \) operation, and
(II) there is no alternative guarded by a \( get(\dense{u_{j}}) \) operation.
In which case, $\coder(AB)$ is equivalent to an empty statement, which
is not possible in view of definition~\ref{jmj-def-lg}.

CASE I: there is no alternative guarded by a \( tell(\dense{t_{i}}) \) operation. Otherwise, \(D = \langle \coder(AB)\vert\emptyset\rangle\rightarrow\langle A_{i}\vert \{ \dense{t_{i}}\}\rangle \) would be a valid computation prefix of \(\coder(AB) \) which should deadlocks afterwards since \( \opsem(AB)=\{ (\emptyset ,\delta^{-}) \}\).
However \( D \) is also a valid computation prefix of \(\coder(AB + tell([a](1))) \). Hence,
\(\coder(AB + tell([a](1))) \) admits a failing computation which contradicts the fact that
\(\opsem(AB  + tell([a](1))) = {(\{ a\},\delta^{+})} \).
  
CASE II: there is no alternative guarded by a \( get(\dense{u_{j}}) \) operation. To that end, let us first consider two auxiliary computations:
as \( \opsem( tell([a](1)) ) = \{ (\{ a \}, \delta^{+}) \} \),
any computation of \( \coder(tell([a](1))) \) starting in the empty store succeeds. 
Let \( \langle \coder(tell([a](1)))\vert\emptyset\rangle
\rightarrow \dots \rightarrow \langle E\vert\{ a_{1},\dots,a_{m}\}\rangle \)
be such a computation. Similarly, let 
\( \langle (tell([b](1)))\vert\emptyset\rangle
\rightarrow \dots \rightarrow\langle E\vert\{ b_{1},\dots ,b_{n}\}\rangle \) 
be one computation of \( \coder(tell([b](1)))\).
As these two computations start by assuming no token on the store and since
$\llgBD{get,tell}$ does not contain negative tests, it is easy to verify that
they can be put sequentially so as to establish the following computations:
\begin{eqnarray*}
   \langle \coder( tell([a](1)); tell([b](1)) ) 
           \vert
           \emptyset 
   \rangle 
   & \rightarrow \dots \rightarrow &
   \langle \coder( tell([b](1)) ) 
           \vert
           \{ a_{1},\dots,a_{m} \}
   \rangle \\
   & \rightarrow \dots \rightarrow &
   \langle E
           \vert
           \{ a_{1},\dots,a_{m}\} \cup \{ b_{1},\dots ,b_{n}\}
   \rangle
\\
   \langle \coder( tell([b](1)); tell([a](1)) ) 
           \vert
           \emptyset 
   \rangle 
   & \rightarrow \dots \rightarrow &
   \langle \coder( tell([a](1)) ) 
           \vert
           \{ b_{1},\dots,b_{n} \}
   \rangle \\
   & \rightarrow \dots \rightarrow &
   \langle E
           \vert
           \{ a_{1},\dots,a_{m}\} \cup \{ b_{1},\dots ,b_{n}\}
   \rangle
\end{eqnarray*} 
As \( \coder(tell([a](1));tell([b](1));AB)\)
has a successful computation, one of the $get(\dense{u_i})$ succeeds, and, consequently, one has
\( \{ \dense{u_j} \} \subseteq \{ a_{1},\dots,a_{m}\} \cup \{ b_{1},\dots ,b_{n}\}
\)
for some $j$. Assume $\dense{u_j} = a_k$ for $k$ and let $d$ be the density 
associated to $u_j$, namely, $\dense{u_j} = a_k(d)$. Then
\[ D' = 
   \langle \coder( tell([a](1)); AB ) 
           \vert
           \emptyset 
   \rangle
   \rightarrow \dots \rightarrow 
   \langle \coder( AB ) 
           \vert
           \{ a_{1},\dots,a_{m} \}
   \rangle
   \rightarrow
   \langle B_j
           \vert
           \{ a_{1},\dots,a_{m}\} \setminus \{ \dense{u_j} \}
   \rangle
\]
is a valid computation prefix of \(\coder(tell([a](1));AB)\).
It can only be continued by failing suffixes
since $tell([a](1)); AB$ fails. However, this induces
the following computation prefix $D''$ for 
\( \coder(tell([a](1));(AB + ask([a](1)))) \) and thus a failing computation
whereas \( tell([a](1));(AB + ask([a](1))) \) only admits a successful
computation: 
\begin{eqnarray*}
D'' = \langle \coder(tell([a](1));(AB + ask([a](1))))
                       \vert \emptyset\rangle 
&  \rightarrow \dots \rightarrow &
        \langle AB + ask([a](1))\vert \{a_{1},\dots ,a_{m}\}\rangle \\
&  \rightarrow & 
        \langle B_{j}\vert \{a_{1}, \dots ,a_{m} \} \setminus \{ \dense{u_j} \} \rangle.
\end{eqnarray*}
The proof proceeds similarly in the case \( u_{j} = b_k \) for some $k$
by then considering 
\( tell([b](1));AB \) and \( tell([b](1));(AB + ask([b](1)))\).
%
%
%
\textbf{(iii)} Otherwise, \(\llgBD{nask,tell}\) would be embedded in \(\llgBD{get,tell}\),
which has been prooved impossible in proposition 12(iv) of \cite{Jacquet-Linden-Darquennes-Foclasa13}.
\textbf{(iv)} Otherwise, \(\llgBD{ask,tell}\) would be embedded in \(\llgBDD{nask,tell}\) which contradicts
proposition 6(ii).

\end{proof}
\end{sloppypar}

Let us now prove that \(\llgBD{get,tell}\) is not comparable with \(\llgBDD{ask,nask,tell}\).

\begin{myproposition} 
\label{dda-prop-9a}
\mbox{}

\noindent
\( \begin{array}[t]{lll}
\\
(i) & & \llgBD{get,tell} \not\leq  \llgBDD{ask,nask,tell} \\
(ii) & & \llgBDD{ask,nask,tell}  \not\leq  \llgBD{get,tell}
\end{array}
\)
\end{myproposition}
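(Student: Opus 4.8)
The statement splits into two halves of very unequal difficulty, so I would treat them separately.

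Part (ii), $\llgBDD{ask,nask,tell}\not\leq\llgBD{get,tell}$, is a one-line reduction. Since $\{nask,tell\}\subseteq\{ask,nask,tell\}$, Proposition~\ref{dda-prop-1a} gives $\llgBDD{nask,tell}\leq\llgBDD{ask,nask,tell}$; hence if we had $\llgBDD{ask,nask,tell}\leq\llgBD{get,tell}$, transitivity of the pre-order $\leq$ would yield $\llgBDD{nask,tell}\leq\llgBD{get,tell}$, contradicting Proposition~\ref{dda-prop-8a}(iv). So essentially all the work is in part (i), $\llgBD{get,tell}\not\leq\llgBDD{ask,nask,tell}$, for which I see no reduction to an earlier result and would give a direct argument.

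For (i) I would argue by contradiction, assuming a coder $\coder$. The separating feature is that $get$ can \emph{decrease} a count, whereas none of $tell$, $ask$, $nask$ ever removes a token, so along every computation of an $\llgBDD{ask,nask,tell}$ agent the store grows monotonically. This is exactly where the replay argument of Proposition~\ref{dda-prop-8a}(i) breaks: there the target had no $nask$, so a successful run could be replayed in a larger store, but with $nask$ a larger store can turn success into failure. First I record the standard strengthening of $P_1$ and $P_3$: if $\opsem(A)\subseteq Sstore\times\{\delta^-\}$ then $\coder(A)$ has \emph{only} failing computations, and dually for $\delta^+$. Writing $p=\coder(tell(t(1)))$ and $g=\coder(get(t(1)))$, compositionality $P_2$ together with $\opsem(tell(t(1))^N\seqc get(t(1))^N)=\{(\emptyset,\delta^+)\}$ and $\opsem(tell(t(1))^N\seqc get(t(1))^{N+1})=\{(\emptyset,\delta^-)\}$ then gives, for every $N$, that $p^N\seqc g^N$ has only successful computations while $p^N\seqc g^{N+1}$ has none.

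The heart of the proof is a pumping argument exploiting that $p$ and $g$ are \emph{fixed, finite} agents. Let $M$ be the largest density occurring in any $ask$ or $nask$ inside $p$ or $g$, and let $W$ be the finite set of tokens they mention. For a store $\sigma$ define its capped profile $\kappa(\sigma)\in\{0,\dots,M\}^{W}$ by capping each count at $M$. Because every test compares a count against a threshold $\le M$ — and, by the characterisation of ${\cal D}(L(m))\sqcap\sigma=\emptyset$ through $SMax$, this remains true for the distributed $ask(L(m))$ and $nask(L(m))$ — the outcome of each primitive of $g$ depends on $\sigma$ only through $\kappa(\sigma)$; hence whether $g$ has a successful computation depends only on $\kappa(\sigma)$, and any successful firing sequence of $g$ from $\sigma$ can be replayed verbatim from any $\sigma'$ with $\kappa(\sigma')=\kappa(\sigma)$. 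Now fix $N>M|W|$ and take a successful computation of $p^N\seqc g^N$. The profile is non-decreasing and can strictly increase at most $M|W|$ times along the whole run, so among the $N$ copies of $g$ at least one, say the $i$-th, starts and ends at the same profile $\bar\kappa$ while succeeding. Replaying this capped-preserving successful block then yields, from the store reached after $p^N\seqc g^{i-1}$, a successful computation of $g^{k}$ for every $k$, and in particular a successful computation of $p^N\seqc g^{N+1}$ — contradicting that the latter has none.

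The main obstacle is squarely part (i), and within it the presence of $nask$: it is what defeats the monotone replay used for the $ask$-only targets in Proposition~\ref{dda-prop-8a}, and it forces the finite-threshold, capped-profile abstraction above. The one genuinely delicate step I expect is the saturation bookkeeping in the replay — checking that the chosen capped-preserving copy of $g$ adds mass only to tokens already saturated at $M$, so that its firing sequence (including the nondeterministic split chosen by each $tell(L(m))$) can be iterated indefinitely without ever changing the profile or flipping any test; everything else is routine pigeonhole on the finite profile lattice.
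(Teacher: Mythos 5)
Your proposal is correct, but on part (i) it takes a genuinely different route from the paper. For part (ii) the two proofs coincide in spirit: the paper also argues by transitivity, except that it descends one level further---if \(\llgBDD{ask,nask,tell} \leq \llgBD{get,tell}\) then \(\llgBD{nask,tell} \leq \llgBD{get,tell}\) via Propositions~\ref{dda-prop-2a} and~\ref{dda-prop-1a}---and then invokes proposition 12(iv) of \cite{Jacquet-Linden-Darquennes-Foclasa13} rather than Proposition~\ref{dda-prop-8a}(iv) of the present paper; both reductions are sound. For part (i), however, the paper does not pump at all: it keeps the single agent \( tell(t(1)) \seqc get(t(1)) \), notes that every computation of \( \coder(tell(t(1))) \seqc \coder(get(t(1))) \) must succeed, and then claims that, since \( \coder(get(t(1))) \) is built from ask, nask and tell primitives ``which do not destroy elements on the store'', its successful computation ``can be repeated step by step'', yielding a successful computation of \( \coder(tell(t(1))) \seqc (\coder(get(t(1))) \parac \coder(get(t(1)))) \) and hence a contradiction with \( \opsem(tell(t(1)) \seqc (get(t(1)) \parac get(t(1)))) = \{(\emptyset,\delta^{-})\} \) --- the same duplication scheme used for Proposition~\ref{dda-prop-13a}(vii). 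Your argument is heavier: auxiliary agents \( p^N \seqc g^N \) and \( p^N \seqc g^{N+1} \), the capped-profile lattice, pigeonhole, and a replayable block. What the extra machinery buys is exactly the point you flag yourself: monotonicity of the store guarantees that ask and tell steps of a replayed computation remain enabled, but \emph{not} that nask steps do --- a nask that succeeded on the smaller store can fail after the duplicate's tells have been interleaved. The paper's one-line replay silently crosses this obstacle, which is created precisely by the presence of nask in the target language, whereas your profile-preserving block is engineered so that its replay provably flips no test (all thresholds are at most \(M\), and its tells feed only tokens already saturated at \(M\)). So the paper's proof is the cheaper argument if one accepts (or repairs) its replay step, while yours is longer but rigorous at the one place where the short argument is fragile.
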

\begin{sloppypar}
\begin{proof}
\textbf{(i)} By contradiction, consider \( tell(t(1)) \seqc get(t(1)) \), 
for which \(\opsem((tell(t(1)) \seqc get(t(1))) = \{ (\emptyset, \delta^{+}) \}\). 
Hence, by \textit{$P_{2}$} and \textit{$P_{3}$}, any computation of
\(\coder(tell(t(1))) \seqc \coder(get(t(1)))\) is successful. Such a
computation is composed of a computation for \(\coder(tell(t(1)))\)
followed by a computation for \(\coder(get(t(1)))\). As the latter is
composed of ask, nask,
and tell primitives which do not destroy elements
on the store, the latter computation can be repeated step by step
which yields a successful computation for \(\coder(tell(t(1))) \seqc
(\coder(get(t(1))) \parac \coder(get(t(1))))\). However,
\(\opsem(tell(t(1))\seqc(get(t(1)) \parac get(t(1))) = \{ (\emptyset,
\delta^{-}) \} \), which produces the announced contradiction.
\textbf{(ii)} Otherwise, \(\llgBD{nask,tell}\) would be embedded in \(\llgBD{get,tell}\) which has
 been prooved impossible in proposition 12(iv) of \cite{Jacquet-Linden-Darquennes-Foclasa13}.
\end{proof}
\end{sloppypar}

Let us now include the get primitive in the Dense Bach with Distributed Density language. We
first prove that \(\llgBDD{get,tell}\) is embedded in \(\llgBDD{ask,get,tell}\), but is not embedded in \(\llgBDD{ask,tell}\).

\begin{myproposition} 
\label{dda-prop-10a} 
\(  \llgBDD{get,tell} \leq \llgBDD{ask,get,tell}\) and \(\llgBDD{get,tell} \not\leq \llgBDD{ask,tell}\)
\end{myproposition}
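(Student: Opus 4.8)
The plan is to treat the two halves of the statement separately and, in each case, to reduce to results already available rather than to reason from scratch. The first half, $\llgBDD{get,tell} \leq \llgBDD{ask,get,tell}$, I would dispatch immediately by language inclusion: since $\{get,tell\} \subseteq \{ask,get,tell\}$, Proposition~\ref{dda-prop-1a} delivers the embedding directly, with $\coder$ and the decoder both taken to be the identity. No genuine work is needed there.

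For the second half, $\llgBDD{get,tell} \not\leq \llgBDD{ask,tell}$, my main strategy is to transport the negative result already established for the weaker source language upward through the pre-order. Concretely, instantiating Proposition~\ref{dda-prop-2a} at $\chi = \{get,tell\}$ gives $\llgBD{get,tell} \leq \llgBDD{get,tell}$, while Proposition~\ref{dda-prop-8a}(i) already states $\llgBD{get,tell} \not\leq \llgBDD{ask,tell}$. Since $\leq$ is a pre-order, and therefore transitive, I would argue by contradiction: were $\llgBDD{get,tell} \leq \llgBDD{ask,tell}$ to hold, composing it with $\llgBD{get,tell} \leq \llgBDD{get,tell}$ would force $\llgBD{get,tell} \leq \llgBDD{ask,tell}$, contradicting Proposition~\ref{dda-prop-8a}(i). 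This closes the argument in a single step.

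Should a self-contained proof be preferred instead, the underlying reason is the non-destructiveness of $ask$ and $tell$: any image $\coder(get([t](1)))$ is built from $ask$ and $tell$ only, neither of which removes tokens, so a computation of it that succeeds from some store can be replayed verbatim from the resulting (larger) store. Testing this against $A = tell([t](1)) \seqc get([t](1))$, whose unique computation succeeds while $tell([t](1)) \seqc get([t](1)) \seqc get([t](1))$ can only fail, yields the contradiction via \textit{$P_{2}$} and \textit{$P_{3}$} exactly as in Proposition~\ref{dda-prop-8a}(i). The only real subtlety in either route is that the replay step depends essentially on the target language containing no destructive primitive; because the combinatorial heart of that observation has already been carried out upstream, the transitivity route turns the present proposition into what is effectively a one-line corollary, so I do not anticipate a serious obstacle here.
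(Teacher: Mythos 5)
Your proposal is correct on both halves, and on the second half your primary route is genuinely different from the paper's. The paper proves \( \llgBDD{get,tell} \not\leq \llgBDD{ask,tell} \) directly: it considers \( A = tell([t](1)) \seqc get([t](1)) \) with \( \opsem(A) = \{(\emptyset,\delta^{+})\} \), invokes \textit{$P_2$} and \textit{$P_3$} to conclude that every computation of \( \coder(tell([t](1))) \seqc \coder(get([t](1))) \) succeeds, and then replays the computation of \( \coder(get([t](1))) \) a second time --- legitimate because it consists of ask and tell primitives only, which never remove tokens --- obtaining a successful computation for the coding of \( tell([t](1)) \seqc get([t](1)) \seqc get([t](1)) \), whose only observable is \( (\emptyset,\delta^{-}) \); this is exactly your fallback argument, including the key non-destructiveness observation. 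Your primary route instead composes Proposition~\ref{dda-prop-2a} (giving \( \llgBD{get,tell} \leq \llgBDD{get,tell} \)) with the negative result of Proposition~\ref{dda-prop-8a}(i) through transitivity of \( \leq \), which the paper explicitly licenses by stating that \( \leq \) is a pre-order. This is sound, shorter, and is in fact the same reduction pattern the paper itself employs in several neighbouring propositions (e.g.\ Propositions~\ref{dda-prop-7a}(i) and \ref{dda-prop-13a}(iii)); what it buys is the elimination of any repeated combinatorial work, at the price of a dependence on Proposition~\ref{dda-prop-8a}(i), whereas the paper's direct proof keeps the proposition self-contained and exhibits the replay argument at the level of the distributed-density primitives themselves. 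Either route is acceptable.
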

\begin{sloppypar}
\begin{proof}
\textbf{(i)} One has
\( \llgBDD{get,tell} \leq  \llgBDD{ask,get,tell}
\)
by language inclusion. 
\textbf{(ii)} By contradiction, consider $A$ = \(tell([t](1)) \seqc
get([t](1)) \).  One has \( \opsem(A) = \{ (\emptyset ,\delta^{+}) \} \). 
By \textit{$P_{2}$} and \textit{$P_{3}$}, any computation of 
\( \opsem( \coder(tell([t](1))) \seqc \coder(get([t](1))))\) is thus
successful. Such a computation is composed of a computation for 
\( \coder(tell([t](1))) \) followed by a computation for 
\( \coder(get([t](1))) \). As \(\coder(get([t](1))) \) is composed of ask and
tell primitives and since ask and tell primitives do not destroy
elements, this latter computation can be repeated, which yields
successful computations for 
\(\opsem( \coder(tell([t](1))) \seqc
\coder(get([t](1))) \seqc \coder(get([t](1)))) \). 
However, 
\( \opsem( tell([t](1)) \seqc get([t](1)) \seqc get([t](1))) 
= \{ (\emptyset ,\delta^{-}) \} \), which leads to the contradiction. 

\hfill 
\end{proof}
\end{sloppypar}

\begin{sloppypar}
Let us now establish that \(\llgBD{get,tell}\) is strictly less expressive than \(\llgBDD{get,tell}\).
\end{sloppypar}

\begin{myproposition} 
\label{dda-prop-11a} 
\( \llgBD{get,tell} <  \llgBDD{get,tell}
\)
\end{myproposition}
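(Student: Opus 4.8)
The plan is to prove the two halves of the strict inequality separately. The embedding $\llgBD{get,tell} \leq \llgBDD{get,tell}$ is immediate from proposition~\ref{dda-prop-2a} with $\chi = \{get,tell\}$, so all the work lies in the non-embedding $\llgBDD{get,tell} \not\leq \llgBD{get,tell}$. I would argue by contradiction, assuming a coder $\coder : \llgBDD{get,tell} \rightarrow \llgBD{get,tell}$ and following the schema of proposition~\ref{dda-prop-4a} and proposition~\ref{dda-prop-8a}, exploiting the inability of $\llgBD{get,tell}$ to atomically consume two distinct tokens $a$ and $b$. The distinguishing agent is $AB = get([a,b](2))$, whose distribution reduces to the single tuple $(a(1),b(1))$, so that $\opsem(AB) = \{(\emptyset,\delta^{-})\}$. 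I would place $\coder(AB)$ in normal form $tell(\dense{t_1});A_1 + \cdots + tell(\dense{t_p});A_p + get(\dense{u_1});B_1 + \cdots + get(\dense{u_q});B_q$ and show that neither kind of alternative can occur, so that $\coder(AB)$ is empty, contradicting definition~\ref{jmj-def-lg}.

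For the tell-guarded alternatives I would reuse CASE I of proposition~\ref{dda-prop-8a} almost verbatim: a prefix $\langle \coder(AB)\mid\emptyset\rangle \rightarrow \langle A_i \mid \{\dense{t_i}\}\rangle$ commits $\coder(AB+tell([a](1)))$ to a failing branch, whereas $\opsem(AB+tell([a](1))) = \{(\{a\},\delta^{+})\}$ since $AB$ cannot fire on the empty store; by \textit{$P_3$} this is absurd. The get-guarded alternatives are the real work, and here the argument must depart from proposition~\ref{dda-prop-8a}, because the source language $\llgBDD{get,tell}$ offers no $ask$ primitive from which to build the auxiliary succeeding agent. I would instead use $get([a](1))$ and $get([b](1))$ as the added alternatives. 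Writing $\{a_1,\dots,a_m\}$ and $\{b_1,\dots,b_n\}$ for stores reached by successful runs of $\coder(tell([a](1)))$ and $\coder(tell([b](1)))$ from the empty store --- these runs being freely sequenceable since $\llgBD{get,tell}$ carries no negative tests --- I would first show that no token $u_j$ lies in $\{a_1,\dots,a_m\}$: if $\dense{u_j}=a_k(d)$ with $a_k$ among them, then repeating $\coder(tell([a](1)))$ $d$ times supplies at least $d$ copies of $a_k$, so $get(\dense{u_j})$ fires along a prefix of $\coder(tell([a](1))^d;AB)$ that can only be continued by failing suffixes; yet that same prefix commits $\coder(tell([a](1))^d;(AB+get([a](1))))$ to a failing branch, although $\opsem(tell([a](1))^d;(AB+get([a](1))))$ contains only a success. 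The symmetric argument with $get([b](1))$ rules out $u_j \in \{b_1,\dots,b_n\}$.

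The contradiction is then assembled as in the last paragraph of proposition~\ref{dda-prop-8a}: the agent $tell([a](1));tell([b](1));AB$ has the single successful computation $(\emptyset,\delta^{+})$, so some successful run of $\coder(tell([a](1));tell([b](1));AB)$ reaches the store $S = \{a_1,\dots,a_m\}\cup\{b_1,\dots,b_n\}$ and there fires one $get(\dense{u_j})$; but the previous step forces $u_j \notin \{a_1,\dots,a_m\}$ and $u_j \notin \{b_1,\dots,b_n\}$, hence $u_j \notin S$, so no such $get$ can be enabled --- the sought contradiction. I expect the delicate point to be exactly this coordination of the density $d$ with token multiplicities: one must repeat the $tell$ enough times that the \emph{consuming} $get(\dense{u_j})$ is genuinely enabled on the purely-$a$ (resp. purely-$b$) store, and must check that the consumption performed by $get$ does not disturb the commitment-to-a-failing-branch argument, which it does not, since only the failing prefix up to $B_j$ is needed.
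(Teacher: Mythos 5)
Your proposal is correct and follows essentially the same route as the paper, which proves this proposition simply by invoking the argument of proposition~\ref{dda-prop-8a}(ii) with \(ask([a,b](2))\) replaced by \(get([a,b](2))\). Your extra care in also replacing the auxiliary alternatives \(ask([a](1))\) and \(ask([b](1))\) by \(get([a](1))\) and \(get([b](1))\) --- forced because \(\llgBDD{get,tell}\) lacks an ask primitive, and harmless because the failing prefix fires the \(get(\dense{u_j})\) guard of \(\coder(AB)\) rather than the added alternative --- is exactly the detail that the paper's one-line cross-reference leaves implicit.
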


\begin{proof}
On the one hand, 
\( \llgBD{get,tell} \leq  \llgBDD{get,tell}
\) 
holds by proposition~\ref{dda-prop-2a}. On the other hand,
\( \llgBDD{get,tell} \not\leq  \llgBD{get,tell}
\)
may be proved exactly 
as in proposition 8(ii), where we replace any occurrence of ask([a,b](2)) by get([a,b](2)).
\hfill 
\end{proof}

\begin{sloppypar}
In order to use once more the reasoning of proposition 8(ii),
 we now prove that \(\llgBDD{ask,tell}\) is
 not comparable with \(\llgBD{nask,get,tell}\).
\end{sloppypar}

\begin{myproposition} 
\label{dda-prop-12a}
\mbox{}

\noindent
\( \begin{array}[t]{lll}
\\
(i) & & \llgBDD{ask,tell} \not\leq  \llgBD{nask,get,tell} \\
(ii) & & \llgBD{nask,get,tell}  \not\leq  \llgBDD{ask,tell}
\end{array}
\)
%
%
\end{myproposition}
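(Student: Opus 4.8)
The plan is to establish \textbf{(i)} by the normal-form argument underlying proposition~\ref{dda-prop-8a}(ii), suitably enlarged to cope with the extra nask-guards available in the richer target, and to obtain \textbf{(ii)} by a one-line reduction to proposition~\ref{dda-prop-8a}(i). For \textbf{(i)} I would argue by contradiction: assume a coder $\coder : \llgBDD{ask,tell} \rightarrow \llgBD{nask,get,tell}$ and take $AB = ask([a,b](2))$, for which $\opsem(AB) = \{(\emptyset,\delta^{-})\}$. Putting $\coder(AB)$ in normal form as $tell(\dense{t_1});A_1 + \cdots + get(\dense{u_1});B_1 + \cdots + nask(\dense{v_1});C_1 + \cdots$, I would show that none of the three families of guards can occur, whence $\coder(AB)$ is empty, contradicting definition~\ref{jmj-def-lg}. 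The tell-guarded alternatives are excluded exactly as in the first case of proposition~\ref{dda-prop-8a}(ii): firing $tell(\dense{t_i})$ on the empty store yields a prefix of $\coder(AB)$ that can only fail, yet the very same prefix is a prefix of $\coder(AB + tell([a](1))) = \coder(AB) + \coder(tell([a](1)))$, producing a failing computation of an agent for which $\opsem(AB + tell([a](1))) = \{(\{a\},\delta^{+})\}$, a contradiction by $P_3$.

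The one genuinely new ingredient is the nask-guarded family, and I would dispatch it by the same empty-store device. Since densities are strictly positive, every guard $nask(\dense{v_k})$ is enabled on the empty store, so $\langle \coder(AB) \vert \emptyset\rangle \rightarrow \langle C_k \vert \emptyset\rangle$ is a valid prefix of $\coder(AB)$ and, $\opsem(AB)$ being a single failure, can only be continued into failure. As this prefix is again a prefix of $\coder(AB + tell([a](1)))$, the latter admits a failing computation, contradicting $\opsem(AB + tell([a](1))) = \{(\{a\},\delta^{+})\}$ through $P_3$. Hence there is no nask-guarded alternative, and, being a pure test that neither adds nor removes tokens, nask plays no further role in the remaining analysis.

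The hard part will be the get-guarded family in the presence of nask. In proposition~\ref{dda-prop-8a}(ii) the witnessing computations of $\coder(tell([a](1)))$ and $\coder(tell([b](1)))$ are concatenated freely, which is legitimate only because $\llgBD{get,tell}$ contains no negative test; once nask is available this breaks, as the tokens deposited by one tell-phase may depend, through an intervening nask, on the store left by the previous phase. I would sidestep the explicit concatenation by applying $P_3$ to $tell([a](1)) \seqc tell([b](1)) \seqc AB$, whose unique computation succeeds, to obtain merely the \emph{existence} of a successful computation of its coder; in that computation some guard $get(\dense{u_{j_0}})$ of $\coder(AB)$ fires, so $\{\dense{u_{j_0}}\}$ lies in the store reached after coding the two tells. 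Because only tell-primitives ever add tokens, every copy of $u_{j_0}$ was placed during the $a$-phase or the $b$-phase; using repetition of the appropriate coding to accumulate the density demanded by the guard, I would trace $\dense{u_{j_0}}$ back to a store reachable by coding $a$ alone or $b$ alone and then close the argument with the $AB + ask([a](1))$ (resp.\ $AB + ask([b](1))$) trick of proposition~\ref{dda-prop-8a}(ii). Making this provenance step watertight---guaranteeing that a $u_{j_0}$ produced only during the $b$-phase, itself run on top of the $a$-store, is still reproducible by a $b$-only computation from the empty store---is precisely the obstacle the nask primitive introduces and the step I would treat with most care.

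Finally, \textbf{(ii)} follows by reduction. The agents of $\llgBD{get,tell}$ use only get- and tell-primitives, so $\llgBD{get,tell} \subseteq \llgBD{nask,get,tell}$ and hence $\llgBD{get,tell} \leq \llgBD{nask,get,tell}$ by language inclusion. Were $\llgBD{nask,get,tell} \leq \llgBDD{ask,tell}$ to hold, transitivity of $\leq$ would give $\llgBD{get,tell} \leq \llgBDD{ask,tell}$, contradicting proposition~\ref{dda-prop-8a}(i); thus $\llgBD{nask,get,tell} \not\leq \llgBDD{ask,tell}$. One could equally rerun the proof of proposition~\ref{dda-prop-8a}(i) verbatim, its witness $tell(t(1)) \seqc get(t(1))$ already belonging to $\llgBD{nask,get,tell}$ and making no use of nask.
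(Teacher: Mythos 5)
Your proposal takes essentially the same route as the paper, and is in fact more explicit than the paper's own text. For \textbf{(i)} the paper's entire proof is one sentence: proceed as in proposition~\ref{dda-prop-8a}(ii) on the same failing agent \(ask([a,b](2))\), ``with the alternatives guarded by a nask primitive of the normal form of the coded version treated as the alternatives guarded by a tell primitive.'' That is precisely your argument: tell-guards are killed as in 8(ii), and nask-guards are killed by the same empty-store device, since every \(nask(\dense{v_k})\) is enabled on \(\emptyset\) (densities being positive), so firing it yields a failing prefix of \(\coder(AB)\) that is also a prefix of \(\coder(AB + tell([a](1)))\), contradicting \(P_3\). For \textbf{(ii)} the paper also gives a one-line reduction, but through proposition~\ref{dda-prop-6a}(iv): otherwise \(\llgBD{nask,tell} \leq \llgBDD{ask,tell}\), using \(\llgBD{nask,tell} \subseteq \llgBD{nask,get,tell}\). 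Your reduction through proposition~\ref{dda-prop-8a}(i), using \(\llgBD{get,tell} \subseteq \llgBD{nask,get,tell}\), is an equally valid instance of language inclusion plus transitivity, and your remark that the witness \(tell(t(1)) \seqc get(t(1))\) of 8(i) lives in the source language anyway is also correct.

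The point where you go beyond the paper is the get-guarded case of (i), and your concern there is well founded: the 8(ii) argument does not transplant verbatim, because the concatenation of the witnessing computations of \(\coder(tell([a](1)))\) and \(\coder(tell([b](1)))\) is justified in 8(ii) by the absence of negative tests in the target, and a nask occurring inside \(\coder(tell([b](1)))\) may be disabled on the store left by the \(a\)-phase. The paper passes over this in silence. Be aware, however, that your repair inherits the very obstacle you flag at its last step: the repetition device used to accumulate the density \(d\), and the provenance claim that a token deposited only during the \(b\)-phase of the composite computation is deposited by some \(b\)-only computation from the empty store, both again replay coded computations on stores different from the ones they were observed on, which nask can invalidate. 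So, as written, your proof is incomplete at exactly the point where the paper's is (the paper simply never acknowledges the point); genuinely closing it would require a counting or saturation device in the spirit of the paper's proof of proposition~\ref{dda-prop-14a}(ii), where one first bounds the cumulative occurrences of tokens appearing in nask primitives of the coded agents and then builds stores large enough to neutralize every such test.
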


\begin{sloppypar}
\begin{proof}
%
%
\textbf{(i)} The proof proceeds as in proposition 8(ii), 
by constructing a successful coded
computation for the same failing agent 
\( ask([a,b](2)) \) 
with the alternatives guarded by a nask primitive of the normal form of the coded version treated as 
the alternatives guarded by a tell primitive.
\textbf{(ii)} Otherwise \(\llgBD{nask,tell} \leq
\llgBDD{ask,tell}\) which contradicts proposition 6(iv). \hfill
%
\end{proof}
\end{sloppypar}

\begin{sloppypar}
We can now prove that \(\llgBDD{get,tell}\) is not comparable with
respectively \(\llgBD{nask,tell}\), \(\llgBDD{nask,tell} \),
\(\llgBD{nask,get,tell}, \llgBDD{ask,nask,tell}\) and
\(\llgBD{ask,nask,tell}\).
\end{sloppypar}

\begin{myproposition} 
\label{dda-prop-13a}
\[ \begin{array}{lll}
\begin{array}{rl}
(i) & \llgBDD{get,tell} \not\leq \llgBD{nask,tell}  \\
(ii) & \llgBD{nask,tell} \not\leq \llgBDD{get,tell}  \\
(iii) & \llgBDD{get,tell} \not\leq \llgBDD{nask,tell}  \\
(iv) & \llgBDD{nask,tell} \not\leq \llgBDD{get,tell} \\
(v) & \llgBDD{get,tell} \not\leq \llgBD{nask,get,tell}
\end{array}
& \hspace{2mm} &
\begin{array}{rl}
(vi) & \llgBD{nask,get,tell} \not\leq \llgBDD{get,tell} \\
(vii) & \llgBDD{get,tell} \not\leq \llgBDD{ask,nask,tell} \\
(viii) & \llgBDD{ask,nask,tell} \not\leq \llgBDD{get,tell} \\
(ix) & \llgBDD{get,tell} \not\leq \llgBD{ask,nask,tell} \\
(x) & \llgBD{ask,nask,tell} \not\leq \llgBDD{get,tell}
\end{array}
\end{array}
\]
\end{myproposition}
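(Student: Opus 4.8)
The plan is to prove each of the ten non-comparability claims in Proposition~\ref{dda-prop-13a} by reusing, wherever possible, the discriminating agents and proof schemes already established, rather than reconstructing arguments from scratch. The ten items split naturally into two families: those asserting $\llgBDD{get,tell} \not\leq L$ for various right-hand languages $L$ (items (i), (iii), (v), (vii), (ix)), and the symmetric ones $L \not\leq \llgBDD{get,tell}$ (items (ii), (iv), (vi), (viii), (x)). For the second family, the cleanest route is reduction to impossibility results already in hand. For instance, item (ii) $\llgBD{nask,tell} \not\leq \llgBDD{get,tell}$ should follow because a hypothetical embedding, composed with the inclusion $\llgBD{get,tell} \le \llgBDD{get,tell}$, would force $\llgBD{nask,tell} \le \llgBD{get,tell}$, contradicting proposition~12(iv) of \cite{Jacquet-Linden-Darquennes-Foclasa13} (exactly the contradiction used in \ref{dda-prop-9a}(ii) and \ref{dda-prop-8a}(iii)). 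Items (iv), (vi), (viii), (x) should reduce similarly: each right-to-left embedding would compose with a known inclusion to yield an embedding between two $\llgBD{\cdot}$ sublanguages already shown incomparable in the earlier paper or in Propositions~\ref{dda-prop-6a}--\ref{dda-prop-9a}.

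For the first family, the workhorse is the argument of \ref{dda-prop-8a}(i) (equivalently \ref{dda-prop-10a}(ii)): the agent $A = tell(t(1)) \seqc get(t(1))$ satisfies $\opsem(A) = \{(\emptyset,\delta^{+})\}$, and since none of $ask$, $nask$, $tell$ destroys tokens, a coded get can be replayed, producing a successful computation of the coder of $tell(t(1)) \seqc (get(t(1)) \parac get(t(1)))$, which must instead fail. This single idempotency-of-coded-get argument handles every right-hand language built only from $ask$, $nask$, $tell$ primitives that cannot remove tokens — which is precisely (i) $\llgBD{nask,tell}$, (iii) $\llgBDD{nask,tell}$, (vii) $\llgBDD{ask,nask,tell}$, and (ix) $\llgBD{ask,nask,tell}$. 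So I would state the replay lemma once and invoke it four times, noting only that adding $ask$ (dense or distributed) and $nask$ to the target does not affect the reasoning since these primitives leave $\sigma$ unchanged (the same observation already made in \ref{dda-prop-7a}(iv) and \ref{dda-prop-9a}(i)).

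The genuinely distinct case is (v), $\llgBDD{get,tell} \not\leq \llgBD{nask,get,tell}$, since here the target language \emph{does} contain a destructive primitive ($get$), so the replay argument breaks down. I expect this to be the main obstacle. The natural strategy is instead to exploit, as in \ref{dda-prop-8a}(ii) and \ref{dda-prop-11a}, the inability of $\llgBD{nask,get,tell}$ to atomically test-and-consume two distinct tokens: take the distributed agent $get([a,b](2))$, put its coder in normal form, and show by the two-case analysis (no alternative guarded by $tell$, no alternative guarded by $get$) that the coder must be empty, contradicting definition~\ref{jmj-def-lg}. The delicate point is that the normal form now admits $nask$-guarded alternatives as well; following the remark in \ref{dda-prop-12a}(i), these should be absorbable into the treatment of the $tell$-guarded alternatives, since a $nask$ that succeeds on the empty store continues to succeed after a purely additive prefix. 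I would therefore carry out the $get([a,b](2))$ argument in full only for this item, treating $nask$-guards as in \ref{dda-prop-12a}(i), and dispatch the remaining nine items by the two reduction schemes above.
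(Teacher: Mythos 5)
Your handling of the second family collapses at item (ii), and this is a genuine gap. You claim that a hypothetical embedding $\llgBD{nask,tell} \leq \llgBDD{get,tell}$, composed with the inclusion $\llgBD{get,tell} \leq \llgBDD{get,tell}$, would force $\llgBD{nask,tell} \leq \llgBD{get,tell}$. It would not: both of those embeddings point \emph{into} $\llgBDD{get,tell}$, so they do not compose at all; to push the hypothetical embedding down into $\llgBD{get,tell}$ you would need $\llgBDD{get,tell} \leq \llgBD{get,tell}$, which is exactly what proposition~\ref{dda-prop-11a} refutes. Worse, no proposition prior to this one establishes \emph{any} non-embedding whose target is $\llgBDD{get,tell}$, so there is nothing (ii) could be reduced to: it needs a direct proof. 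This infects the whole family, since (iv), (vi), (viii) and (x) are each obtained (as you intend, and as the paper does) by a language-inclusion reduction that bottoms out precisely at (ii) --- e.g.\ if (iv) failed one would get $\llgBD{nask,tell} \leq \llgBDD{nask,tell} \leq \llgBDD{get,tell}$, which is again (ii). The paper's direct argument for (ii) is the missing piece: take $A = tell(t(1)) \seqc nask(t(1))$, so that $\opsem(A) = \{(\{t(1)\},\delta^{-})\}$ and, by $P_2$ and $P_3$, every computation of $\coder(tell(t(1))) \seqc \coder(nask(t(1)))$ must fail; but $tell(t(1))$ and $nask(t(1))$ both succeed from the empty store, so their coded versions have only successful computations there, and since both coded agents are built from get and tell primitives only --- a fragment with no negative test --- a successful computation from the empty store can be replayed from any larger store; hence a successful computation of $\coder(tell(t(1)))$ can be continued by a successful one of $\coder(nask(t(1)))$, contradicting the required failure.

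On the first family you are much closer to the paper. Item (v) is indeed the one genuinely hard case, and your plan (normal form of the coder of $get([a,b](2))$, two-case analysis, nask-guarded alternatives handled as in proposition~\ref{dda-prop-12a}(i)) is exactly the paper's, which disposes of (v) as ``similar to proposition~\ref{dda-prop-11a}''. Your replay lemma is also the paper's own argument for (vii), and (ix) follows (the paper gets it by reducing to (vii) through proposition~\ref{dda-prop-2a}). For (i) and (iii), however, the paper does not use replay: it reduces them, via the chain $\llgBD{ask,tell} \leq \llgBD{get,tell} \leq \llgBDD{get,tell}$, to the known impossibilities $\llgBD{ask,tell} \not\leq \llgBD{nask,tell}$ of \cite{Jacquet-Linden-Darquennes-Foclasa13} and $\llgBD{ask,tell} \not\leq \llgBDD{nask,tell}$ of proposition~\ref{dda-prop-6a}(ii). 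That choice is not cosmetic: replaying a coded computation inside a target containing nask is delicate, because a nask step that succeeded on the first pass can fail after the store has been enlarged by the duplicated tells; the reductions sidestep this entirely, so you should prefer them where they are available rather than invoking the replay four times.
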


\begin{sloppypar}
\begin{proof}
\textbf{(i)} Indeed, otherwise we have $\llgBD{ask,tell} \leq  \llgBD{nask,tell}$ 
which has been proved impossible in \cite{Jacquet-Linden-Darquennes-Foclasa13}. 
\textbf{(ii)} By contradiction, consider 
\( A = tell(t(1))\seqc nask(t(1))\), for which
\( \opsem(A) = \{ (\{ t(1) \}, \delta^{-}) \} \). Then,
by \textit{$P_{2}$} and \textit{$P_{3}$}, any computation of 
\( \coder(tell(t(1))) \seqc \coder(nask(t(1))) \) must fail whereas 
we shall establish that 
\( \coder(tell(t(1))) \seqc \coder(nask(t(1))) \) has a successful
computation. Indeed, let us observe that 
\( \opsem(tell(t(1))) = \{ (\{ t(1) \},\delta^{+}) \} \) and 
\( \opsem(nask(t(1))) = \{ (\emptyset,\delta^{+}) \} \). 
For both cases, by \textit{$P_{3}$},
any computation of \(\coder(tell(t(1))) \) and \(\coder(nask(t(1)))\)
starting on the empty store is successful. Consequently, since
\(\coder(tell(t(1))) \) and \(\coder(nask(t(1)))\) are 
composed of get and tell primitives, so are all of their 
computations starting from any store. Therefore, any
(successful) computation of \(\coder(tell(t(1)))\) starting on the empty
store can be continued by a (successful) computation of
\(\coder(nask(t(1)))\), which leads to the contradiction. 
\textbf{(iii)} Otherwise we have 
$\llgBD{ask,tell} \leq \llgBDD{nask,tell}$
which contradicts proposition 6(ii). 
\textbf{(iv)} Otherwise we have 
$\llgBD{nask,tell} \leq \llgBDD{get,tell}$ 
which contradicts (ii) above. 
\textbf{(v)} Similar to proposition 11.
%
\textbf{(vi)} Otherwise 
$\llgBD{nask,tell} \leq \llgBDD{get,tell}$ which contradicts (ii) above. 
\textbf{(vii)} By contradiction. Let us first observe that
\( \opsem(tell([t](1)) \seqc get([t](1)) ) = \{(\emptyset,\delta^{+}) \}\). 
By \textit{$P_{2}$} and \textit{$P_{3}$}
any computation of \((\coder(tell([t](1))) \seqc \coder(get([t](1))))\)
starting in the empty store is thus successful. By repeating step by step
the computation of $\coder(get([t](1)))$, this leads to 
a successful computation of
\((\coder(tell([t](1))) \seqc (\coder(get([t](1))) \parac
\coder(get([t](1)))))\) starting in the empty store. However, 
\(\opsem(tell([t](1)) \seqc (get([t](1)) \parac
get([t](1)))) = \{ (\emptyset,\delta^{-}) \}\), which leads to the
contradiction. 
\textbf{(viii)} Otherwise 
$\llgBDD{nask,tell} \leq \llgBDD{get,tell}$ 
which contradicts proposition (iv) above. 
\textbf{(ix)} Otherwise 
$\llgBDD{get,tell} \leq \llgBDD{ask,nask,tell}$ 
which contradicts (vii) above. 
\textbf{(x)} Otherwise $\llgBD{nask,tell} \leq \llgBDD{get,tell}$ which
contradicts (ii) above. 
\hfill 
\end{proof}
\end{sloppypar}

Let us now establish that \(\llgBDD{nask,tell}\) and
\(\llgBD{ask,nask,tell}\) are strictly less expressive than
\(\llgBDD{ask,nask,tell} \).

\begin{myproposition} 
\label{dda-prop-14a}
\mbox{}

\noindent
\( \begin{array}{ll}
\\
(i) & \llgBDD{nask,tell} < \llgBDD{ask,nask,tell} \\
(ii) &  \llgBD{ask,nask,tell} < \llgBDD{ask,nask,tell}
\end{array}
\)
\end{myproposition}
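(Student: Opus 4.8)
The plan is to read each strict inequality as the conjunction of an embedding and the failure of the reverse embedding, and to obtain both reverse failures \emph{purely by transitivity} from incomparabilities already established in propositions~\ref{dda-prop-6a} and~\ref{dda-prop-7a}, rather than by re-running a normal-form argument. The two facts that do all the work are that $\leq$ is a pre-order (hence transitive) and that $\mathcal{L}' \subseteq \mathcal{L}$ implies $\mathcal{L}' \leq \mathcal{L}$ (language inclusion, proposition~\ref{dda-prop-1a}).

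For the easy directions I would argue as follows. In (i), $\llgBDD{nask,tell} \leq \llgBDD{ask,nask,tell}$ is immediate from language inclusion, since $\{nask,tell\} \subseteq \{ask,nask,tell\}$. In (ii), $\llgBD{ask,nask,tell} \leq \llgBDD{ask,nask,tell}$ is exactly the content of proposition~\ref{dda-prop-2a}, which embeds any Dense Bach sublanguage into its distributed-density counterpart by coding $t(m)$ as $[t](m)$.

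For strictness I would reduce each reverse non-embedding to a known incomparability. For (i), suppose for contradiction $\llgBDD{ask,nask,tell} \leq \llgBDD{nask,tell}$; since $\llgBDD{ask,tell} \leq \llgBDD{ask,nask,tell}$ by language inclusion, transitivity would give $\llgBDD{ask,tell} \leq \llgBDD{nask,tell}$, contradicting proposition~\ref{dda-prop-7a}(ii). For (ii), suppose $\llgBDD{ask,nask,tell} \leq \llgBD{ask,nask,tell}$; since $\llgBDD{nask,tell} \leq \llgBDD{ask,nask,tell}$ by language inclusion, transitivity would yield $\llgBDD{nask,tell} \leq \llgBD{ask,nask,tell}$, contradicting proposition~\ref{dda-prop-7a}(iv).

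The only real obstacle is choosing the intermediate sublanguage so that the transitivity chain lands on an already-proved incomparability; once $\llgBDD{ask,tell}$ (for (i)) and $\llgBDD{nask,tell}$ (for (ii)) are identified as witnesses, the proof is immediate. Were these shortcuts unavailable, one would instead re-derive the non-embeddings directly, mimicking the normal-form argument of proposition~\ref{dda-prop-4a} on the agent $ask([a,b](2))$; the delicate points there would be handling the extra $nask$-guarded alternatives of the normal form (which, like $tell$-guarded alternatives, can fire on the empty store and so force a deadlocking computation) and justifying that a coded $tell$-computation may be repeated sequentially even in the presence of negative tests. The transitivity route avoids all of this.
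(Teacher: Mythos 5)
Your proof is correct. For part (i) it coincides with the paper's own argument: the embedding is language inclusion, and strictness follows because $\llgBDD{ask,nask,tell} \leq \llgBDD{nask,tell}$ combined with $\llgBDD{ask,tell} \leq \llgBDD{ask,nask,tell}$ would give $\llgBDD{ask,tell} \leq \llgBDD{nask,tell}$, contradicting proposition~\ref{dda-prop-7a}(ii). For part (ii), however, you take a genuinely different route: the paper does not use your transitivity shortcut through proposition~\ref{dda-prop-7a}(iv), but instead assumes a coder ${\cal C}: \llgBDD{ask,nask,tell} \rightarrow \llgBD{ask,nask,tell}$ and develops a fresh, self-contained normal-form argument. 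That argument is noticeably more delicate than the earlier ones: since the target language now contains $nask$, the paper first sets $n$ to be the cumulative occurrences of tokens in the $nask$ primitives of $\coder(tell([a](1)))$, chooses the witness agent $ABs = ask([a,b](n+4))$ with density calibrated to $n$, and uses parallel compositions such as $(tell([b](1)))^{||(n+2)}$ to overshoot any $nask$ threshold, before showing that no $tell$-, $nask$- or $ask$-guarded alternative of the normal form of $\coder(ABs)$ can survive. Your reduction --- from the assumed embedding and $\llgBDD{nask,tell} \leq \llgBDD{ask,nask,tell}$ derive $\llgBDD{nask,tell} \leq \llgBD{ask,nask,tell}$, contradicting proposition~\ref{dda-prop-7a}(iv) --- is logically sound: proposition~\ref{dda-prop-7a} precedes this one, $\leq$ is a pre-order, and no circularity is introduced. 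It buys brevity and reuse of established results; what the paper's direct proof buys in exchange is independence from proposition~\ref{dda-prop-7a}(iv), whose own proof is only sketched as an adaptation of proposition~\ref{dda-prop-5a}(ii), so the authors arguably wanted one fully worked argument for a target language containing both $ask$ and $nask$. One caveat on your fallback remark: simply re-running the argument of proposition~\ref{dda-prop-4a} on $ask([a,b](2))$ would not suffice here, precisely because $nask$ primitives in the target can block the free sequential repetition of coded $tell$-computations; the density $n+4$ and the $n+2$ parallel tells in the paper exist to work around exactly this obstacle. Since your main argument never invokes the fallback, this does not affect its correctness.
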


\begin{sloppypar}
\begin{proof}
\textbf{(i)} By sublanguage inclusion, one has 
\( \llgBDD{nask,tell} \leq \llgBDD{ask,nask,tell} \).
Moreover, if we had
\( \llgBDD{ask,nask,tell} \leq \llgBDD{nask,tell} 
\),
then we would have
$\llgBDD{ask,tell} \leq \llgBDD{nask,tell}$,
which contradicts proposition 7(ii).

\textbf{(ii)}
Let us thus proceed by contradiction by assuming the existence of a coder ${\cal C}$
from $\llgBDD{ask,nask,tell}$ to $\llgBD{ask,nask,tell}$.
Let $n$ be the cumulative occurrences of tokens in $nask$ primitives of $\coder(tell([a](1)))$.

As $\coder(tell([a](1)))$ has only successful computations, let
$S_a$ be the store
resulting from one of them. 
Moreover, as a matter of notation, 
let the construction $A^{||q}$ denote the parallel composition of $q$ copies of $A$.
As $(tell([b](1)))^{||(n+2)} \seqc
tell([a](1)))$ succeeds as well, let $S'_b$ denote the store resulting
from one successful computation of its coding.  Consider finally 
\( ABs = ask([a,b](n+4)) 
\)
with the intuitive aim of requesting one $a$ with
$n+3$ copies of $b$. Consider $\coder(ABs)$ in its normal form:
\[ \begin{array}{l}
      tell(\dense{t_1}) \seqc A_1 
     +   \cdots
     +   tell(\dense{t_p}) \seqc A_p \\ \mbox{}
     +   ask(\dense{u_1}) \seqc B_1
     +   \cdots
     +   ask(\dense{u_q}) \seqc B_q \\ \mbox{}
     +   nask(\dense{v_1}) \seqc C_1
     +   \cdots
     +   nask(\dense{v_r}) \seqc C_r
\end{array} \]
%
As in proposition 5(ii), it is possible to establish that there are no alternatives guarded by $tell(\dense{t_i})$ and $nask(\dense{v_j})$ primitives.
Let us
 prove that \( \{ u_1, \cdots, u_q \} \cap (S_a
\cup S'_b) = \emptyset \).  This is 
done in two steps by
establishing that (1) \( \{ u_1, \cdots, u_q \} \cap S_a = \emptyset
\), and that (2) \( \{ u_1, \cdots, u_q \} \cap S'_b = \emptyset \).

First let us prove that \( \{ u_1, \cdots, u_q \} \cap S_a = \emptyset
\). Assume $u_i \in S_a$ and let $d$ be the density associated to
$u_i$, namely, $\dense{u_i} = u_i(d)$. Let us observe that each step
of the considered computation of \( \coder(tell([a](1))) \) can be
repeated in turn, in as many parallel occurences of it as needed, so
that
\begin{eqnarray*}
  \lefteqn{P = \langle \coder( tell([a](1))^{||d} \seqc ABs) \vert \emptyset\rangle} \\
     & \rightarrow \dots \rightarrow & 
           \langle ABs \vert \cup_{k=1}^d S_a \rangle \\
     & \rightarrow & 
           \langle B_{i} \vert (\cup_{k=1}^d S_a) \rangle 
\end{eqnarray*}
is a valid computation prefix of 
\( \coder( tell([a](1))^{||q} \seqc ABs)
\),
which can only be continued by failing suffixes.
However \(P \) induces the following computation prefix \( P'\) for
\( \coder( tell([a](1))^{||q} \seqc (ABs + tell([a](1))) )
\)
which admits only successful computations:
\begin{eqnarray*}
  \lefteqn{P' = \langle \coder( tell([a](1))^{||d} \seqc (ABs + tell([a](1))) ) \vert \emptyset\rangle} \\
     & \rightarrow \dots \rightarrow & 
           \langle \coder(ABs+tell([a](1))) \vert \cup_{k=1}^d S_a \rangle \\
     & \rightarrow & 
           \langle B_{i} \vert (\cup_{k=1}^d S_a) \rangle 
\end{eqnarray*}

\noindent leading to the contradiction.

Secondly, the proof that \( \{ u_1, \cdots, u_q \} \cap S'_b =
\emptyset \) is established similarly by considering $S'_b$ instead of
$S_a$ and $tell([b](1))$ instead of $tell([a](1))$.
\end{proof}
\end{sloppypar}

\(\llgBDD{ask,tell}\) is strictly less expressive than \(\llgBDD{ask,nask,tell}\).

\begin{myproposition} 
\label{dda-prop-15a}
\( \llgBDD{ask,tell} < \llgBDD{ask,nask,tell}
\)
\end{myproposition}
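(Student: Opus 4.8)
The plan is to prove the strict inequality in the customary two steps: first the embedding $\llgBDD{ask,tell} \leq \llgBDD{ask,nask,tell}$, and then the non-embedding $\llgBDD{ask,nask,tell} \not\leq \llgBDD{ask,tell}$ in the reverse direction, the conjunction of which yields the strictness. The first step is immediate by sublanguage inclusion (proposition~\ref{dda-prop-1a}), since $\{ ask, tell \} \subseteq \{ ask, nask, tell \}$, so no work is required there.

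For the second step I would avoid replaying a normal-form argument from scratch and instead reuse the machinery already assembled in the excerpt, namely that $\leq$ is a pre-order together with proposition~\ref{dda-prop-7a}(i), which asserts precisely $\llgBDD{nask,tell} \not\leq \llgBDD{ask,tell}$. Suppose, for contradiction, that $\llgBDD{ask,nask,tell} \leq \llgBDD{ask,tell}$. By sublanguage inclusion (proposition~\ref{dda-prop-1a}) one also has $\llgBDD{nask,tell} \leq \llgBDD{ask,nask,tell}$, and composing these two embeddings via transitivity of $\leq$ gives $\llgBDD{nask,tell} \leq \llgBDD{ask,tell}$, contradicting proposition~\ref{dda-prop-7a}(i). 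This closes the non-embedding direction and hence the proposition.

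Since the genuinely delicate direction, the impossibility of simulating a negative test inside a purely positive ask/tell language, has already been discharged in proposition~\ref{dda-prop-7a}(i), the only content remaining here is the transitivity bookkeeping, and I expect no real obstacle. Should a self-contained argument be preferred, one could instead replay the monotonicity reasoning directly on the witness agent $tell([t](1)) \seqc nask([t](1))$, whose observable $\opsem(tell([t](1)) \seqc nask([t](1)))$ is $\{ (\{ t(1) \}, \delta^{-}) \}$: since $\coder(nask([t](1)))$ lies in $\llgBDD{ask,tell}$ and succeeds on the empty store, and since ask and tell primitives neither remove tokens nor fail when the store grows, that success must persist after $\coder(tell([t](1)))$ has enriched the store, producing a successful computation of $\coder(tell([t](1)) \seqc nask([t](1)))$ in violation of termination invariance $P_3$. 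The subtle point in that alternative would be a careful justification of the store-monotonicity of ask/tell computations; the transitivity route sidesteps it entirely, which is why I would favour it.
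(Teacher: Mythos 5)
Your proof is correct and follows essentially the same route as the paper: the embedding by language inclusion, and the non-embedding by reducing a hypothetical embedding of $\llgBDD{ask,nask,tell}$ into $\llgBDD{ask,tell}$, via sublanguage inclusion and transitivity of $\leq$, to $\llgBDD{nask,tell} \leq \llgBDD{ask,tell}$, which contradicts proposition~\ref{dda-prop-7a}(i). The paper's proof is just a compressed statement of exactly this transitivity argument.
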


\begin{sloppypar}
\begin{proof}
On the one hand, 
\( \llgBDD{ask,tell} \leq  \llgBDD{ask,nask,tell}
\) 
results from language inclusion.
On the other hand, one has
\( \llgBDD{ask,nask,tell} \not\leq  \llgBDD{ask,tell}
\) 
since otherwise 
\(\llgBDD{nask,tell} \leq \llgBDD{ask,tell}\),
which contradicts proposition 7(i).\hfill 
\end{proof}
\end{sloppypar}

\begin{sloppypar}
Symmetrically to proposition 12(i) and 12(ii),
\(\llgBD{nask,get,tell}\) is not comparable with \(\llgBDD{nask,tell}\).
\end{sloppypar}

\begin{myproposition} 
\label{dda-prop-16a}
\mbox{}

\( \begin{array}[t]{lll} \\
(i) & & \llgBD{nask,get,tell} \not\leq  \llgBDD{nask,tell}\\
(ii) & & \llgBDD{nask,tell} \not\leq \llgBD{nask,get,tell}
\end{array}
\)
\end{myproposition}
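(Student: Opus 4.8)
The statement is an incomparability, so I would prove the two non-embeddings separately: \textbf{(i)} reduces to an impossibility already established, whereas \textbf{(ii)} is the genuinely new argument and dualizes proposition~\ref{dda-prop-12a}(i). For \textbf{(i)} I argue by contradiction: suppose $\llgBD{nask,get,tell} \leq \llgBDD{nask,tell}$. Since $\{get,tell\} \subseteq \{nask,get,tell\}$, language inclusion gives $\llgBD{get,tell} \leq \llgBD{nask,get,tell}$, and transitivity of $\leq$ then yields $\llgBD{get,tell} \leq \llgBDD{nask,tell}$, contradicting proposition~\ref{dda-prop-8a}(iii). It is essential to reduce through the get fragment rather than the nask one: discarding get would only give $\llgBD{nask,tell} \leq \llgBDD{nask,tell}$, which is true by proposition~\ref{dda-prop-5a} and produces no contradiction.

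For \textbf{(ii)} I mirror proposition~\ref{dda-prop-12a}(i), swapping the test of presence for a test of absence as in proposition~\ref{dda-prop-5a}(ii). The witness is the failing agent $A = tell([a](1)) \seqc tell([b](1)) \seqc nask([a,b](2))$, for which $\opsem(A) = \{(\{a,b\},\delta^{-})\}$: once $a$ and $b$ are both present one has ${\cal D}([a,b](2)) \sqcap \{a,b\} \neq \emptyset$, so the closing $nask$ is blocked. Assuming a coder ${\cal C}$, properties $P_2$ and $P_3$ force $\coder(A) = \coder(tell([a](1))) \seqc \coder(tell([b](1))) \seqc \coder(nask([a,b](2)))$ to fail; since $tell([a](1))$ and $tell([b](1))$ have purely successful observables, the two coded tells always reach a store $T$ (morally $S_a \cup S_b$, with $S_a, S_b$ the stores produced by the coded tells of $a$ and of $b$) from which $\coder(nask([a,b](2)))$ must deadlock. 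Conversely, $\opsem(tell([a](1)) \seqc nask([a,b](2))) = \{(\{a\},\delta^{+})\}$, the symmetric fact for $b$, and $\opsem(nask([a,b](2))) = \{(\emptyset,\delta^{+})\}$ force the very same coded statement to succeed from $S_a$, from $S_b$ and from $\emptyset$.

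The first key step is to show that $\coder(nask([a,b](2)))$ is stuck on $T$. I place it in the context $A' = tell([a](1)) \seqc tell([b](1)) \seqc (nask([a,b](2)) \choice tell([c](1)))$ with $c$ fresh: because $nask([a,b](2))$ is blocked on $\{a,b\}$, the choice is forced onto the $tell([c](1))$ branch and $\opsem(A') = \{(\{a,b,c\},\delta^{+})\}$ is purely successful. Were any guard of $\coder(nask([a,b](2)))$ enabled on $T$, the coded choice could commit to it and then deadlock, producing a failing computation of $\coder(A')$ and contradicting $P_3$. Yet $\coder(nask([a,b](2)))$ must succeed from $S_a \subseteq T$, so it does take a first step there; as a tell- or get-guard enabled on $S_a$ would stay enabled on the larger store $T$, that first step must be a nask guard $nask(\dense{v})$ whose token is scarce in $S_a$ but abundant in $T$, hence supplied by $S_b$, and symmetrically starting from $S_b$.

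I expect the remaining reconciliation to be the main obstacle, and it is precisely the get primitive that makes it so: unlike in proposition~\ref{dda-prop-5a}(ii), a coded computation may \emph{consume} tokens, so I cannot merely replay a success from $S_a$ on $T$ and must instead control the exact multiplicities of the guard tokens. Here I would import the repeated parallel-replay technique of propositions~\ref{dda-prop-8a}(ii) and~\ref{dda-prop-14a}(ii): replaying the coded tell of $a$ (resp.\ of $b$) in as many parallel copies as the density $d$ of a guard token requires pins down whether that token lies in $S_a$ or in $S_b$, and comparing the induced prefix of $\coder(tell([a](1))^{||d} \seqc A)$ with that of its enrichment by an extra $tell([a](1))$ alternative yields the contradiction with $P_3$. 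Carrying this out for every guard shows that $\coder(nask([a,b](2)))$ cannot distinguish $T$ from the single-token stores and therefore admits a successful computation on $T$, contrary to the requirement that it deadlock there.
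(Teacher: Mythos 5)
Part (i) of your proposal is correct and is a mild variant of the paper's argument: the paper gets its contradiction from proposition~\ref{dda-prop-6a}(ii), which implicitly relies on the embedding of \(\llgBD{ask,tell}\) into \(\llgBD{nask,get,tell}\) established in earlier work, whereas you reduce through the get fragment and contradict proposition~\ref{dda-prop-8a}(iii) using only language inclusion and transitivity. Your reduction is equally valid and has the small advantage of staying within the propositions of this paper.

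The genuine gap is in part (ii), at the step where you pass from success on $S_a$ to a contradiction on $T$. You assert $S_a \subseteq T$ and that a tell- or get-guard of $\coder(nask([a,b](2)))$ enabled on $S_a$ ``would stay enabled on the larger store $T$''. This monotonicity is precisely what the get primitive of the target language destroys: $\coder(tell([b](1)))$ is itself an agent of \(\llgBD{nask,get,tell}\), so its execution from $S_a$ may perform get steps that consume tokens of $S_a$; consequently $T$, the store reached after both coded tells, need not contain $S_a$, and a get-guard $get(\dense{u})$ enabled on $S_a$ may be disabled on $T$. In propositions~\ref{dda-prop-5a}(ii) and~\ref{dda-prop-7a}(iv), to which the paper's own (admittedly only sketched) proof refers, the target languages contain no destructive primitive besides the tested ones, the store grows monotonically along computations, and this step is sound; proposition~\ref{dda-prop-16a}(ii) is exactly the case where that property fails, so the step cannot be borrowed as is. Your closing paragraph acknowledges non-monotonicity, but only for replaying successful computations, and the parallel-replay technique imported from propositions~\ref{dda-prop-8a}(ii) and~\ref{dda-prop-14a}(ii) does not obviously repair the hole: with get in the target, parallel copies of $\coder(tell([a](1)))$ can steal one another's tokens, and pumping multiplicities can disable nask guards occurring inside the coded tells, so your final claim that the guards ``cannot distinguish $T$ from the single-token stores'' is the desired conclusion restated rather than proved. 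By contrast, your preliminary step --- using the context $tell([a](1)) \seqc tell([b](1)) \seqc (nask([a,b](2)) \choice tell([c](1)))$ with $c$ fresh to show that no first action of $\coder(nask([a,b](2)))$ is enabled on $T$ --- is sound and is a clean packaging of the usual normal-form argument; what is still missing is an analysis of the get-guards (and of what the coded tells may consume) that does not rely on store growth.
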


\begin{sloppypar}
\begin{proof}
\textbf{(i)} Otherwise, \(\llgBD{ask,tell} \leq \llgBDD{nask,tell}\)
which contradicts proposition 6(ii).

\textbf{(ii)}
The proof proceeds by contradiction, similarly to the proofs of 
\( \llgBDD{nask,tell} \not\leq \llgBD{ask,nask,tell}\) of proposition \ref{dda-prop-7a}(iv),
which itself extends that of
\( \llgBDD{nask,tell} \not\leq \llgBD{nask,tell}
\) of proposition \ref{dda-prop-5a}(ii).

\end{proof}
\end{sloppypar}

\(\llgBD{nask,get,tell}\) is not comparable with \(\llgBDD{ask,nask,tell}\).

\begin{myproposition} 
\label{dda-prop-17a}
\mbox{}

\noindent
\( \begin{array}[t]{lll}
\\
(i) & & \llgBDD{ask,nask,tell} \not\leq \llgBD{nask,get,tell}\\
(ii) & & \llgBD{nask,get,tell} \not\leq \llgBDD{ask,nask,tell}
\end{array} \)
\end{myproposition}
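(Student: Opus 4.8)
The plan is to deduce both incomparabilities from the two incomparability results already established for the pure ask/tell fragments, namely Proposition~\ref{dda-prop-12a}(i) and Proposition~\ref{dda-prop-9a}(i), rather than to rerun the delicate normal-form and resource-duplication arguments. Everything rests on two facts recalled in the modular embedding subsection: $\leq$ is a pre-order, hence transitive, and every language embeds each of its sublanguages (language inclusion, Proposition~\ref{dda-prop-1a} for the distributed-density fragments).

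For part (i) I would argue by contradiction. Assuming $\llgBDD{ask,nask,tell} \leq \llgBD{nask,get,tell}$, the syntactic inclusion of the ask/tell fragment into the ask/nask/tell fragment gives $\llgBDD{ask,tell} \leq \llgBDD{ask,nask,tell}$ by language inclusion; composing with the assumed embedding through transitivity yields $\llgBDD{ask,tell} \leq \llgBD{nask,get,tell}$, which directly contradicts Proposition~\ref{dda-prop-12a}(i).

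For part (ii) I would proceed symmetrically. Assuming $\llgBD{nask,get,tell} \leq \llgBDD{ask,nask,tell}$, the inclusion of $\llgBD{get,tell}$ into $\llgBD{nask,get,tell}$ gives $\llgBD{get,tell} \leq \llgBD{nask,get,tell}$, and transitivity then produces $\llgBD{get,tell} \leq \llgBDD{ask,nask,tell}$, contradicting Proposition~\ref{dda-prop-9a}(i). What makes both reductions legitimate is that enlarging the source fragment with the $nask$ primitive can only strengthen a non-embeddability statement, so it suffices to reduce to a sublanguage that still retains the discriminating feature: the atomic two-token test $ask([a,b](2))$ for part (i), and the consuming $get$ for part (ii).

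I expect no genuine obstacle, the only care needed being to pick the right witness sublanguage so that the cited propositions apply verbatim. Were those results unavailable, the real work would lie in the proof of Proposition~\ref{dda-prop-9a}(i): since $\llgBDD{ask,nask,tell}$ owns no destructive primitive, a successful coding of $get(t(1))$ could be replayed in parallel, so that $\coder(tell(t(1)) \seqc (get(t(1)) \parac get(t(1))))$ would admit a successful computation although the original agent only fails; but this argument is already discharged there and need not be repeated here.
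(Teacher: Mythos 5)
Your proposal is correct and follows essentially the same route as the paper: part (i) is exactly the paper's reduction to Proposition~\ref{dda-prop-12a}(i) via the inclusion \(\llgBDD{ask,tell} \leq \llgBDD{ask,nask,tell}\) and transitivity, and part (ii) is exactly the paper's appeal to Proposition~\ref{dda-prop-9a}(i) via \(\llgBD{get,tell} \leq \llgBD{nask,get,tell}\). You merely spell out the transitivity steps that the paper leaves implicit in its one-line justifications.
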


\begin{sloppypar}
\begin{proof}
\textbf{(i)} Otherwise, \(\llgBDD{ask,tell} \leq \llgBD{nask,get,tell}\) 
which contradicts proposition 12(i). 
\textbf{(ii)} Resulting from proposition 9(i).
\hfill 
\end{proof}
\end{sloppypar}


%
%
%
%

\begin{sloppypar}
\(\llgBDD{ask,nask,tell}\) is strictly less expressive than \(\llgBDD{ask,nask,get,tell}\).
\end{sloppypar}

\begin{myproposition} 
\label{dda-prop-18a}
\( \llgBDD{ask,nask,tell} < \llgBDD{ask,nask,get,tell}
\)
\end{myproposition}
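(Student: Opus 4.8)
The plan is to prove the two halves of the strict inequality separately. The forward direction, $\llgBDD{ask,nask,tell} \leq \llgBDD{ask,nask,get,tell}$, is immediate from language inclusion (Proposition~\ref{dda-prop-1a}), since the primitive set on the left is contained in the one on the right.

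For strictness I would establish $\llgBDD{ask,nask,get,tell} \not\leq \llgBDD{ask,nask,tell}$ by contradiction, reducing it to a non-embedding already proved rather than rebuilding the separating argument. Assume a modular embedding of $\llgBDD{ask,nask,get,tell}$ into $\llgBDD{ask,nask,tell}$ exists. Language inclusion (Proposition~\ref{dda-prop-1a}) also gives $\llgBDD{get,tell} \leq \llgBDD{ask,nask,get,tell}$, and since $\leq$ is a pre-order, transitivity yields $\llgBDD{get,tell} \leq \llgBDD{ask,nask,tell}$. This contradicts Proposition~\ref{dda-prop-13a}(vii). What makes the shortcut legitimate is that the target language $\llgBDD{ask,nask,tell}$ is exactly the one occurring in Proposition~\ref{dda-prop-13a}(vii), while $\llgBDD{get,tell}$ sits below the source $\llgBDD{ask,nask,get,tell}$.

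I expect essentially no obstacle along this route: the real content has already been discharged in Proposition~\ref{dda-prop-13a}(vii). That earlier argument is where the genuine difficulty lies, and it turns on the fact that $get$ consumes tokens whereas no primitive of $\llgBDD{ask,nask,tell}$ does. Concretely, the agent $tell([t](1)) \seqc get([t](1))$, whose only observable is $(\emptyset,\delta^{+})$, would be coded by a non-consuming computation that can be replayed to force a spurious success of the failing agent $tell([t](1)) \seqc (get([t](1)) \parac get([t](1)))$. The delicate point there is controlling the $nask$ guards of the coder under replay, since replaying $tell$ steps adds tokens that can invalidate an absence test. Because that subtlety is already handled in Proposition~\ref{dda-prop-13a}(vii), routing the present proof through transitivity avoids confronting it again.
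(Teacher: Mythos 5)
Your proposal is correct and matches the paper's own proof exactly: both halves are handled the same way, with language inclusion giving $\llgBDD{ask,nask,tell} \leq \llgBDD{ask,nask,get,tell}$, and strictness obtained by reducing $\llgBDD{ask,nask,get,tell} \not\leq \llgBDD{ask,nask,tell}$ via transitivity to Proposition~\ref{dda-prop-13a}(vii). Your closing discussion of the replay argument behind 13(vii) is accurate but not needed, since that result is invoked, not re-proved, in both your argument and the paper's.
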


\begin{sloppypar}
\begin{proof}
On the one hand,
\( \llgBDD{ask,nask,tell} \leq \llgBDD{ask,nask,get,tell}
\)
results from language inclusion. On the other hand,
\( \llgBDD{ask,nask,get,tell} \not\leq  \llgBDD{ask,nask,tell}.
\) Otherwise, \( \llgBDD{get,tell} \leq  \llgBDD{ask,nask,tell}\), which contradicts
proposition 13(vii).
\end{proof}
\end{sloppypar}

\(\llgBDD{get,tell}\) is strictly less expressive than \(\llgBDD{nask,get,tell}\).

\begin{myproposition} 
\label{dda-prop-19a}
\( \llgBDD{get,tell} < \llgBDD{nask,get,tell}
\)
\end{myproposition}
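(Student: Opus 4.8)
The statement asserts a strict inequality, so I would split it into the two usual directions. For the easy inclusion \( \llgBDD{get,tell} \leq \llgBDD{nask,get,tell} \), I would simply invoke language inclusion (proposition~\ref{dda-prop-1a}), since \( \{get,tell\} \subseteq \{nask,get,tell\} \) and any language embeds its sublanguages via the identity coder and decoder.

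The substance lies in proving \( \llgBDD{nask,get,tell} \not\leq \llgBDD{get,tell} \). Rather than redo a normal-form argument from scratch, I would reduce it to an already-established impossibility by exploiting the transitivity of \( \leq \). The plan is to argue by contradiction: assuming \( \llgBDD{nask,get,tell} \leq \llgBDD{get,tell} \), I would chain this hypothetical embedding with the two known embeddings \( \llgBD{nask,tell} \leq \llgBDD{nask,tell} \) (proposition~\ref{dda-prop-2a}) and \( \llgBDD{nask,tell} \leq \llgBDD{nask,get,tell} \) (language inclusion, proposition~\ref{dda-prop-1a}). Since \( \leq \) is a preorder, composing the three embeddings yields \( \llgBD{nask,tell} \leq \llgBDD{get,tell} \), which directly contradicts proposition~\ref{dda-prop-13a}(ii).

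The only delicate point is that all the genuine work has been pushed into proposition~\ref{dda-prop-13a}(ii), whose proof is where the real separation is achieved: it rests on the fact that \( \llgBDD{get,tell} \) contains no negative test, so a computation succeeding from the empty store still succeeds once the store is enriched, and hence \( nask \) cannot be faithfully simulated. For the present proposition no fresh obstacle arises — once the embedding chain is set up, the contradiction is immediate. The main thing to get right is that the preorder composition is applied to the correct languages, and that proposition~\ref{dda-prop-13a}(ii) is indeed the statement \( \llgBD{nask,tell} \not\leq \llgBDD{get,tell} \) that closes the chain.
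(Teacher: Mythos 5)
Your proof is correct, but it takes a genuinely different route from the paper's. For the hard direction \( \llgBDD{nask,get,tell} \not\leq \llgBDD{get,tell} \), the paper argues directly rather than by reduction: it considers \( tell([t](1)) \seqc nask([t](1)) \), whose only observable is \( (\{ t(1) \},\delta^{-}) \), so that by \( P_{2} \) and \( P_{3} \) the composition \( \coder(tell([t](1))) \seqc \coder(nask([t](1))) \) must fail; on the other hand, both coded agents succeed from the empty store, and since \( \coder(nask([t](1))) \) is built from get and tell primitives only, it retains at least one successful computation from \emph{every} store, so any successful computation of \( \coder(tell([t](1))) \) can be continued into a successful one of \( \coder(nask([t](1))) \), a contradiction. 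Your argument instead assumes the embedding exists and chains \( \llgBD{nask,tell} \leq \llgBDD{nask,tell} \leq \llgBDD{nask,get,tell} \leq \llgBDD{get,tell} \) via propositions~\ref{dda-prop-2a} and \ref{dda-prop-1a} and transitivity of the preorder, contradicting proposition~\ref{dda-prop-13a}(ii). Both proofs ultimately rest on the same fact --- the monotone get/tell fragment cannot simulate a negative test --- but yours buys brevity and reuse of an already-proved separation, while the paper's buys a self-contained argument that exhibits the separation mechanism at the distributed-density level itself, independently of the earlier chain of results. As a minor streamlining, your chain is one link longer than necessary: proposition~\ref{dda-prop-13a}(iv), namely \( \llgBDD{nask,tell} \not\leq \llgBDD{get,tell} \), combined with the single language inclusion \( \llgBDD{nask,tell} \leq \llgBDD{nask,get,tell} \), already yields the contradiction without descending to the Dense Bach level.
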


\begin{sloppypar}
\begin{proof}
On the one hand,
\( \llgBDD{get,tell} \leq \llgBDD{nask,get,tell}
\)
results from language inclusion. On the other hand,
\( \llgBDD{nask,get,tell} \not\leq  \llgBDD{get,tell}
\)
is established by contradiction.
Consider \( tell([t](1)) \seqc nask([t](1))\), 
for which 
\( \opsem(tell([t](1)) \seqc nask([t](1)) = \{ (\{ t(1) \},\delta^{-}) \}\). Hence, by
\textit{$P_{2}$} and \textit{$P_{3}$}, \(\coder(tell([t](1))) \seqc
\coder(nask([t](1)))\) fails. The contradiction comes then from the fact
that at least one computation of \(\coder(tell([t](1))) \seqc
\coder(nask([t](1)))\) starting on the empty store is
successful. Indeed, as \( \opsem(tell([t](1))) = \{ (\{ t(1) \},
\delta^{+}) \}\), any computation of $\coder(tell([t](1)))$
starting on the empty store succeeds. Similarly, any computation of
$\coder(nask([t](1)))$ starting on the empty store succeeds. 
Moreover, as $\coder(nask([t](1)))$
is composed of get and tell primitives only, for any store $\sigma$,
$\coder(nask([t](1)))$ admits at least one successful computation starting on $\sigma$. 
It follows that any
computation of $\coder(tell([t](1)))$ starting on the empty store can be
continued by a (successful) computation of $\coder(nask([t](1)))$, which
leads to the announced contradiction. \hfill 
\end{proof}
\end{sloppypar}

\begin{sloppypar}
Finally, \(\llgBD{ask,nask,get,tell}\) can be proved strictly less
expressive than \(\llgBDD{ask,nask,get,tell}\).
\end{sloppypar}

\begin{myproposition} 
\label{dda-prop-20a}
\( \llgBD{ask,nask,get,tell} < \llgBDD{ask,nask,get,tell}
\)
\end{myproposition}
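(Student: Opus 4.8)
The statement splits into an embedding and a separation. The embedding \( \llgBD{ask,nask,get,tell} \leq \llgBDD{ask,nask,get,tell} \) is immediate from proposition~\ref{dda-prop-2a}, instantiated with the full set of primitives. The plan for the separation \( \llgBDD{ask,nask,get,tell} \not\leq \llgBD{ask,nask,get,tell} \) is to reduce it to a statement about a small sublanguage and then reuse the normal-form machinery already developed. First I would observe that, since \( \{ask,tell\} \subseteq \{ask,nask,get,tell\} \), proposition~\ref{dda-prop-1a} gives \( \llgBDD{ask,tell} \leq \llgBDD{ask,nask,get,tell} \); as \( \leq \) is a pre-order, any coder witnessing \( \llgBDD{ask,nask,get,tell} \leq \llgBD{ask,nask,get,tell} \) would compose with it to yield \( \llgBDD{ask,tell} \leq \llgBD{ask,nask,get,tell} \). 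Hence it suffices to prove \( \llgBDD{ask,tell} \not\leq \llgBD{ask,nask,get,tell} \).

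To establish this last non-embedding I would argue by contradiction, assuming a coder \( {\cal C} : \llgBDD{ask,tell} \rightarrow \llgBD{ask,nask,get,tell} \), and exploit, as in propositions~\ref{dda-prop-4a} and~\ref{dda-prop-14a}, the inability of Dense Bach to atomically test the joint presence of two distinct tokens \( a \) and \( b \). The witness is the counting-robust agent \( ABs = ask([a,b](n+4)) \) of proposition~\ref{dda-prop-14a}(ii), where \( n \) now bounds the cumulative number of token occurrences appearing in the \emph{nask} and \emph{get} primitives of \( \coder(tell([a](1))) \) and \( \coder(tell([b](1))) \); this choice guarantees that no negative test and no consumption step of the coded tell-computations can be used to detect the many \( b \)'s that \( ABs \) requires. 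Since \( \opsem(ABs) = \{(\emptyset,\delta^{-})\} \), every computation of \( \coder(ABs) \) fails, and I would put \( \coder(ABs) \) in the normal form of~\cite{Br-Ja-Express98-ii}, whose alternatives are now guarded by \emph{tell}, \emph{ask}, \emph{nask} and \emph{get} primitives.

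The argument then proceeds in two cases, showing that every alternative vanishes so that \( \coder(ABs) \) reduces to an empty statement, which is impossible by definition~\ref{jmj-def-lg}. In CASE~I I would rule out the \emph{tell}- and \emph{nask}-guarded alternatives exactly as in proposition~\ref{dda-prop-5a}(ii): both kinds of guards fire on the empty store without removing any token, so any such alternative yields a computation prefix of \( \coder(ABs) \) that is also a prefix of \( \coder(ABs + tell([a](1))) \), producing a failing computation of the latter and contradicting \( \opsem(ABs + tell([a](1))) = \{(\{a\},\delta^{+})\} \). In CASE~II I would rule out the \emph{ask}- and \emph{get}-guarded alternatives by introducing the stores \( S_a \) and \( S'_b \) produced by a successful computation of \( \coder(tell([a](1))) \) and of \( \coder((tell([b](1)))^{||(n+2)} \seqc tell([a](1))) \) respectively, and by showing that none of the tested tokens \( u_j \) (for the ask-guards) nor \( w_j \) (for the get-guards) lies in \( S_a \cup S'_b \); the ask case follows proposition~\ref{dda-prop-14a}(ii) and the get case follows CASE~II of proposition~\ref{dda-prop-8a}(ii).

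The hard part will be CASE~II for the \emph{get}-guarded alternatives, because \emph{get} consumes tokens: the repetitions of the coded tell-computations used to build the auxiliary stores must be laid out --- in parallel, via the \( {}^{||q} \) construction --- so that no \emph{get} or \emph{nask} step of one copy interferes with another, and the threshold \( n+4 \) must be verified to strictly exceed every multiplicity a \emph{nask} can test or a \emph{get} can remove along these computations. Once these counting bounds are checked, the disjointness arguments of propositions~\ref{dda-prop-14a}(ii) and~\ref{dda-prop-8a}(ii) combine verbatim, forcing each surviving guard to be unable ever to fire and thus completing the contradiction.
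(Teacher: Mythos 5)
Your embedding half and your transitivity reduction are both correct: \(\llgBD{ask,nask,get,tell} \leq \llgBDD{ask,nask,get,tell}\) is proposition~\ref{dda-prop-2a}, and composing a hypothetical coder with \(\llgBDD{ask,tell} \leq \llgBDD{ask,nask,get,tell}\) (proposition~\ref{dda-prop-1a}) validly reduces the separation to \(\llgBDD{ask,tell} \not\leq \llgBD{ask,nask,get,tell}\). But here you diverge from the paper, which never runs a fresh normal-form argument for this proposition: it disposes of the separation in one line, by observing that an embedding of the full languages would yield \(\llgBDD{get,tell} \leq \llgBD{nask,get,tell}\), contradicting the already-proved proposition~\ref{dda-prop-13a}(v). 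The claim you reduce to appears nowhere in the paper and is strictly stronger than its nearest relatives, proposition~\ref{dda-prop-12a}(i) (whose target \(\llgBD{nask,get,tell}\) lacks ask) and proposition~\ref{dda-prop-14a}(ii) (whose target \(\llgBD{ask,nask,tell}\) lacks get), so it would need a complete proof of its own.

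That proof is exactly what your sketch does not supply, and this is a genuine gap. The elimination of get-guarded alternatives in proposition~\ref{dda-prop-8a}(ii) relies on the explicitly stated fact that \(\llgBD{get,tell}\) \emph{contains no negative tests}, which is what allows the coded computations of \(tell([a](1))\) and \(tell([b](1))\) to be concatenated into one; the elimination of ask-guarded alternatives in proposition~\ref{dda-prop-14a}(ii) relies on replaying the coded computation of \(tell([a](1))\) in parallel copies, which is tenable only because that target has no destructive primitive. When the target contains both nask and get, each technique destroys the other's precondition: a nask step inside one copy of \(\coder(tell([a](1)))\) may be disabled by tokens deposited by another copy or by the other coded tell, and a get step inside one copy, or inside one of the continuations of the normal form of \(\coder(ABs)\), may consume tokens that a later ask or get guard needs. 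Your threshold \(n\) bounds only the multiplicities written in the nask and get primitives of \(\coder(tell([a](1)))\) and \(\coder(tell([b](1)))\); it constrains neither the nask and get primitives occurring in the continuations of \(\coder(ABs)\) nor this cross-copy interference, which is what actually has to be controlled. So ``once these counting bounds are checked, the disjointness arguments combine verbatim'' is not a proof step; it is precisely the lemma you still owe. (Your instinct that something must be proved against the full target language is sound --- the paper's own reduction quietly weakens the target from \(\llgBD{ask,nask,get,tell}\) to \(\llgBD{nask,get,tell}\) in order to invoke proposition~\ref{dda-prop-13a}(v) --- but your proposal identifies the needed statement without actually proving it.)
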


\begin{sloppypar}
\begin{proof}
On the one hand,
\( \llgBD{ask,nask,get,tell} \leq \llgBDD{ask,nask,get,tell}
\) 
is directly deduced from proposition 2. On the other hand, 
if one had
\( \llgBDD{ask,nask,get,tell} \leq \llgBD{ask,nask,get,tell}
\)
then
\(\llgBDD{get,tell} \leq  \llgBD{nask,get,tell}\)
would hold, which contradicts proposition 13(v).
  \hfill 
\end{proof}
\end{sloppypar}


\section{Conclusion}
\label{sectConcl}

This paper is written in the continuity of our previous research on the expressiveness of Linda-like languages.
It has presented an extension of our Dense Bach language, that had promoted
the notions of density and dense tokens. The new language, called Dense Bach with
Distributed Density, proposes to distribute the density on a finite list
of tokens manipulated by the four classical primitives of the language. Technically
this is achieved by associating a positive number, called density, to a finite list of tokens
and to distribute this density among the tokens of the list.

Our work builds upon previous work by some of the authors
\cite{Br-Ja-COORD99,BrJa-SCP00,Jacquet-Linden-Foclasa03,%
  Jacquet-Linden-Coord04,Jacquet-Linden-Foclasa04}. We have
essentially followed the same lines and in particular have used de
Boer and Palamidessi's notion of modular embedding to compare the
families of sublanguages of Dense Bach and Dense Bach with Distributed
Density. Accordingly, we have established a gain of expressivity,
namely that Dense Bach with Distributed Density is strictly more
expressive than Dense Bach and, consequently, in view of the results
of \cite{Jacquet-Linden-Darquennes-Foclasa13}, strictly more
expressive than the Bach  and Linda languages.

Our work has similarities but also differences with several work on
the expressiveness of Linda-like languages. Compared to
\cite{Zavat01} and \cite{Zavat02}, it is worth observing that a
different comparison criteria is used to compare the expressiveness of
languages. Indeed, in these pieces of work, the comparison is
performed on (i) the compositionality of the encoding with respect to
parallel composition, (ii) the preservation of divergence and
deadlock, and (iii) a symmetry condition. Moreover, as will be
observed by the careful reader, we have taken a more liberal view with
respect to the preservation of termination marks in requiring these
preservations on the store resulting from the execution from the empty
store of the coded versions of the considered agents and not on the
same store. In particular, these ending stores are not required to be
of the form \(\sigma \cup \sigma\) (where \( \cup \) denotes multi-set
union) if this is so for the stores resulting from the agents
themselves.

In \cite{Zavat03}, nine variants of the \(\llg{ask,nask,get,tell} \)
language are studied. They are obtained by varying both the nature of
the shared data space and its structure. Rephrased in the setting of 
\cite{DB-PA-CONCUR-94}, this amounts to considering different
operational semantics. In contrast, in our work we fix an operational
semantics and compare different languages on the basis of this
semantics.  In \cite{Zavat04}, a process algebraic treatment of a
family of Linda-like concurrent languages is presented.  Again,
different semantics are considered whereas we have sticked to one
semantics and have compared languages on this basis.

In \cite{Zavat07}, a study of the absolute expressive power of
different variants of Linda-like languages has been made, whereas we
study the relative expressive power of different variants of such
languages (using modular embedding as a yard-stick and the ordered
interpretation of tell).  

It is worth observing that
\cite{Zavat01,Zavat02,Zavat03,Zavat04,Zavat07} do not deal with a
notion of density attached to tuples. In contrast, \cite{Zavat05} and
\cite{Zavat06} decorate tuples with an extra field in order to
investigate how probabilities and priorities can be introduced in the
Linda coordination model. Different expressiveness results are
established in \cite{Zavat05} but on an absolute level with respect to
Turing expressiveness and the possibility to encode the Leader
Election Problem. Our work contrasts in several aspects. First, we have
established relative expressiveness results by comparing the
sublanguages of two families. Moreover, some of these sublanguages
incorporate the $nask$ primitives, which
 strictly increases the
expressiveness. Finally, the introduction of density resembles but is
not identical to the association of weights to tuples. Indeed, in
contrast to \cite{Zavat05,Zavat06} we do not modify the tuples on
the store and do not modify the matching function so as to retrieve
the tuple with the highest weight. In contrast, we modify the tuple
primitives so as to be able to atomically put several occurrences of a
tuple on the store and check for the presence or absence of a number
of occurrences. We have also introduced a distribution of a density
among the tokens of a
set, which results in adding a new
non-deterministic behavior to the tell, ask and get primitives. As can
be appreciated by the reader through the comparison of Bach, Dense
Bach and Dense Bach with Distributed Density, this facility of
handling atomically several occurrences produces a real increase of
expressiveness.  One may however naturally think of encoding the
number of occurrences of a tuple as an additional weight-like
parameter. It is nevertheless not clear how our primitives tackling at
once several occurrences can be rephrased in Linda-like primitives and
how the induced encoding would still fulfills the requirements of
modularity. 
Moreover, in contrast to Linda-like language, due to the
non-determinism of the get and tell primitives, it is not clear how to
code ask primitives by get and tell ones in our distributed density
framework.
 This will be the subject for future research.

In \cite{Viroli09}, Viroli and Casadei propose a stochastic extension
of the Linda framework, with a notion of tuple concentration, similar
to the weight of \cite{Zavat05} and \cite{Zavat06} and our notion of
density. The syntax of this tuple space is modeled by means of a
calculus, with an operational semantics given as an hybrid CTMC/DTMC
model. This operational semantics describes the behavior of tell, ask
and get like primitives but does not consider a nask like
primitive. Moreover, no expressiveness results are established and
there is no counterpart for non-determinism arising from the
distribution of density on tokens.

These three last pieces of work tackle probabilistic extensions of
Linda-like languages.  As a further and natural step in our research,
we aim at studying how our notion of density can be the basis of such
probabilistic extensions. As our work also relies on the possibility
to atomically put several occurrences of tokens and test for their
presence or absence, we will also examine in future work how Dense
Bach with Distributed Density compares with the Gamma language.

%

\bibliographystyle{eptcs}
\bibliography{dda}

\end{document}